\newcommand{\buchi}{B\"uchi }
\newcommand{\Spot}{\textsf{Spot}}
\newcommand{\inc}[1] {\ensuremath{\mathsf{in}({#1})}}
\newcommand{\DepSynt} {\textsf{DepSynt}}
\newcommand{\DNNF}{\textsf{DNNF}}
\newif\iffullversion
\newcommand{\fullversiontext}[2]{%
    \iffullversion
        #1%
    \else
        #2%
    \fi
}
\begin{document}

\title{On Dependent Variables in  Reactive Synthesis\thanks{This is the full version of a conference paper published in TACAS 2024.}}

\author{
S. Akshay\inst{1} \and
Eliyahu Basa\inst{2} \and
Supratik Chakraborty\inst{1} \and
Dror Fried\inst{2}
}

\authorrunning{S. Akshay, E. Basa, S. Chakraborty, D. Fried}

\institute{
Indian Institute of Technology Bombay, Mumbai, India
\and
The Open University of Israel, Israel
}

\maketitle

\begin{abstract}
Given a Linear Temporal Logic (LTL) formula over input and output variables, reactive synthesis requires us to design a deterministic Mealy machine that gives the values of outputs at every time step for every sequence of inputs, such that the LTL formula is satisfied. In this paper, we investigate the notion of dependent variables in the context of reactive synthesis. Inspired by successful pre-processing steps in Boolean functional synthesis, we define dependent variables as output variables that are uniquely assigned, given an assignment, to all other variables and the history so far. We describe an automata-based approach for finding a set of dependent variables. Using this, we show that dependent variables are surprisingly common in reactive synthesis benchmarks. Next, we develop a novel synthesis framework that exploits dependent variables to construct an overall synthesis solution. By implementing this framework using the widely used library {\Spot}, we show that reactive synthesis using dependent variables can solve some problems beyond the reach of several existing techniques. Further, among benchmarks with dependent variables, if the number of non-dependent variables is low ($\leq 3$ in our experiments), our method is able outperform all state-of-the-art tools for synthesis.

\keywords{Reactive synthesis  \and Functionally dependent variables\and BDDs}

\end{abstract}

\section{Introduction}
\label{sec:intro}

Reactive synthesis concerns the design of deterministic transducers (often Mealy or Moore machines) that generate a sequence of outputs in response to a sequence of inputs such that a given temporal logic specification is satisfied.  Ever since Church introduced the problem~\cite{church1962} in 1962, there has been a rich and storied history of work in this area over the past six decades.  Recently, it was shown that a form of pre-processing, viz. decomposing a Linear Temporal Logic (LTL) specification, can lead to significant performance gains in downstream synthesis steps~\cite{FinkbeinerGP21}. The general idea of pre-processing a specification to simplify synthesis has also been used very effectively in the context of Boolean functional synthesis~\cite{bfss,cadet,akshayCAV18,EngineeringBFS}.  Motivated by the success of one such pre-processing step, viz. identification of uniquely defined outputs, in Boolean functional synthesis, we introduce the notion of dependent outputs in the context of reactive synthesis in this paper.  We develop its theory and show by means of extensive experiments that dependent outputs abound in reactive synthesis benchmarks, and can be effectively exploited to obtain synthesis techniques with orthogonal strengths vis-a-vis existing state-of-the-art techniques.  

In the context of propositional specifications, it is not uncommon for a specification to uniquely define an output variable in terms of the input variables and other output variables.  A common example of this arises when auxiliary variables, called Tseitin variables, are introduced to efficiently convert a specification not in conjunctive normal form (CNF) to one that is in CNF~\cite{tseitin1983complexity}.  Being able to identify such uniquely defined variables efficiently can be very helpful, whether it be for checking satisfiability, for model counting or synthesis.  This is because these variables do not alter the basic structure or cardinality of the solution space of a specification regardless of whether they are projected out or not.  Hence, one can often simplify the reasoning about the specification by ignoring (or projecting out) these variables. In fact, the remarkable practical success of Boolean functional synthesis tools such as Manthan~\cite{EngineeringBFS} and BFSS~\cite{bfss,akshayCAV18} can be partly attributed to efficient techniques for identifying a large number of uniquely defined variables.  We draw inspiration from these works and embark on an investigation into the role of uniquely defined variables, or \emph{dependent variables}, in the context of reactive synthesis.  To the best of our knowledge, this is the first attempt at directly using dependent variables for reactive synthesis.

We start by first defining the notion of dependent variables in  LTL specifications for reactive synthesis.
Specifically, given an LTL formula $\varphi$ over a set of input variables $I$ and output variables $O$, a set of variables $X\subseteq O$ is said to be \emph{dependent} on a set of variables $Y\subseteq I\cup (O\backslash X)$ in $\varphi$, if at every step of every infinite sequence of inputs and outputs satisfying $\varphi$, the finite history of the sequence together with the current assignment for $Y$ uniquely defines the current assignment for $X$. The above notion of dependency generalizes the notion of uniquely defined variables in Boolean functional synthesis, where the value of a uniquely defined output at any time is completely determined by the values of inputs and (possibly other) outputs at that time.  We show that our generalization of dependency in the context of reactive synthesis is useful enough to yield a synthesis procedure with improved performance vis-a-vis competition winning tools, for a non-trivial number of reactive synthesis benchmarks.

We present a novel automata-based technique for identifying a subset-maximal set of dependent variables in an LTL specification $\varphi$. Specifically, we convert $\varphi$ to a language-equivalent non-deterministic \buchi automaton (NBA) $A_\varphi$, and then deploy efficient path-finding techniques to identify a subset-maximal set of outputs $X$ that are dependent on $Y = I \cup (O \setminus X)$. We implemented our method to determine the prevalence of dependent variables in existing reactive synthesis benchmarks. Our finding shows that out of 1141 benchmarks taken from the SYNTCOMP~\cite{syntComp} competition, 300 had at least one dependent output variable and 26 had all output variables dependent.

Once a subset-maximal set, say $X$, of dependent variables is identified, we proceed with the synthesis process as follows. Referring to the NBA $A_\varphi$ alluded to above, we first transform it to an NBA $A_\varphi'$ that accepts the language $L'$ obtained from $L(\varphi)$ after removing (or projecting out) the $X$ variables.  Our experiments show that $A_\varphi'$ is more compactly representable compared to $A_\varphi$, when using BDD-based representations of transitions (as is done in state-of-the-art tools like \Spot~\cite{spot}). Viewing $A_\varphi'$ as a new (automata-based) specification with output variables $O \setminus X$, we now synthesize a transducer $T_Y$ from $A'$ using standard reactive synthesis techniques.  This gives us a strategy $f^Y:\Sigma_I^*\rightarrow \Sigma_{O\backslash X}$ for each non-dependent variable in $O\setminus X$.  Next, we use a novel technique based on Boolean functional synthesis to directly construct a circuit that implements a transducer $T_X$ that gives a strategy $f_X:\Sigma_Y^*\rightarrow \Sigma_X$ for the dependent variables.  Significantly, this circuit can be constructed in time polynomial in the size of the (BDD-based) representation of $A_\varphi$. The transducers $T_Y$ and $T_X$ are finally merged to yield an overall transducer $T$ that describes a strategy $f:\Sigma_I^*\rightarrow \Sigma_O$ solving the synthesis problem for $\varphi$. 

We implemented our approach in a tool called {\DepSynt}.  Our tool was developed in C++ using APIs from the widely used library {\Spot} for representing and manipulating non-deterministic B\"{u}chi automata. We performed a comparative analysis of our tool with winning entries of the SYNTCOMP~\cite{syntComp} competition to evaluate how knowledge of dependent variables helps reactive synthesis. Our experimental results show that identifying and utilizing dependent  variables results in improved synthesis performance when the count of non-dependent variables is low.  Specifically, our tool outperforms state-of-the-art and highly optimized synthesis tools on benchmarks that have at least one dependent variable and at most 3 non-dependent variables. This leads us to hypothesize that exploiting dependent variables benefits synthesis when the count of non-dependent variables is below a threshold.  Given the preliminary and un-optimized nature of our implementation, we believe there is significant scope for improvement of our results. 

\paragraph{Related work}
\label{sec:related}

Reactive Synthesis has been an extremely active research area for the last several decades (see e.g. \cite{church1962,pnueli89,BloemJPPS12,FinkbeinerS13,FinkbeinerGP21}).  Not only is the theoretical investigation of the problem rich, there are also several tools that are available to solve synthesis problems in practice. These include solutions like {\tt ltlsynt}~\cite{michaud2018reactive} based on {\Spot}~\cite{spot}, Strix~\cite{meyer2018strix} and BoSY~\cite{faymonville2017bosy}. Our tool relies heavily on  {\Spot} and its APIs, which we use liberally to manipulate non-deterministic \buchi automata. Our synthesis approach is based on the standard conversion of the converted NBA to a deterministic parity automata (DPA) (see~\cite{BloemCJ18} for an overview of the challenges of reactive synthesis).

Our work may be viewed as lifting the idea of uniquely defined variables in Boolean functional synthesis to the context of reactive synthesis.  Viewed from this perspective, our work is not the first to lift ideas from Boolean functional Synthesis. Following a framework called "back-and-forth" for Boolean functional synthesis that uses a decomposition of a specification into separate formulas on input variables and on output variables~\cite{ChakrabortyFTV22}, the work in~\cite{AmramBFTVW21} constructed a reactive synthesis tool for specific benchmarks that admit a separation of the specification into formulas for only the environment variables and formulas for only the system variables. In this sense, this current work of ours is another step in bridging the gap between the  Boolean functional synthesis and the reactive synthesis communities.

The remainder of the paper is structured as follows. We  introduce definitions and notations in Section~\ref{sec:prelims}. In Section~\ref{sec:dependencies} we define dependent variables for LTL formulas, and describe our framework to find it. In Section~\ref{sec:synt} we describe our automata-based synthesis framework and discuss some of its implementation details in Section~\ref{sec:ImpNew}. We describe our evaluation in Section~\ref{sec:eval} and conclude in Section~\ref{sec:conc}.

\section{Preliminaries}
\label{sec:prelims}

Given a finite alphabet $\Sigma$, an infinite \emph{word} $w$ is a sequence $w_0w_1w_2\cdots$ where for every $i$, the $i^{th}$ letter of $w$, $w_i\in\Sigma$.
The \emph{prefix}  $w_0\cdots w_i$ of size $i+1$ of $w$ is denoted by $w[0,i]$. Then $w[0,0]=w_0$. We denote $w[0,-1]$ to be the empty word. The set of all infinite words over $\Sigma$ is denoted by $\Sigma^\omega$. We call $L\subseteq \Sigma^\omega$ a \emph{language} over infinite words in $\omega$.
When the alphabet $\Sigma$ is combined from two distinct alphabets, $\Sigma=\Sigma_X\times\Sigma_Y$ for some sets of variables $X,Y$, we abuse notation and for a letter $a=(a_1,a_2)\in\Sigma$, denote by $a.X$ the projection of $a$ on $\Sigma_X$, that is, the letter $a_1\in\Sigma_X$. Similarly, $a.Y$ denotes the projection of $a$ on $\Sigma_Y$, that is the letter $a_2\in\Sigma_Y$. We define by $w.X$ the word obtained from a word $w\in\Sigma$ on $\Sigma_X$ i.e. $w.X= w_0.X\cdots$. 

\paragraph{Linear Temporal Logic.}

A Linear Temporal Logic (LTL) formula is constructed with a finite set of propositional variables $V$, using Boolean operators such as $\vee, \wedge,$ and $\neg,$ and temporal operators such as next $X$, until $U$, etc. 
 The set $V$ induces an alphabet $\Sigma_V=2^V$ of all possible assignments ($true/false$) to the variables of $V$.
The language of an LTL formula $\varphi$, denoted $L(\varphi)$ is the set of all words in $\Sigma_V^\omega$. The semantics of the operators and satisfiability relation is defined as usual~\cite{huth-ryan}. We denote the number of variables $V$ by $|V|$, and size of formula $\varphi$, i.e., number of its subformulas by $|\varphi|$. We sometimes abuse notation and identify the singleton set of a variable $\{z\}$ with the variable $z$, and denote $E_V$ by $E$ when $V$ is clear from the context.

\paragraph{Nondeterministic \buchi Automata.}
A Nondeterministic \buchi Automaton (NBA) is a tuple $A=(\Sigma, Q, \delta, q_0, F)$ where $\Sigma$ is the alphabet, $Q$ is a finite set of states, $\delta: Q \times \Sigma \rightarrow 2^Q$ is a non-deterministic transition function, $q_0$ is the initial state and $F\subseteq Q$ is a set of accepting states. $A$ can be seen as a  directed labeled graph with vertices $Q$ and an edge $(q,q')$ exists with a label $a$ if $q'\in \delta(q,a)$. We denote the set of incoming edges to $q$ by $in(q)$ and the set of outgoing edges from $q$ by $out(q)$.
A \emph{path} in $A$ is then a (possibly infinite) sequence of states $\rho=(q_{i_0},q_{i_1},\cdots)$ in which for every $j>0$, $(q_{i_j},q_{i_{j+1}})$ is an edge in $A$. A \emph{run} is a path that starts in $q_0$, and is \emph{accepting} if it visits a state in $F$ infinitely often. A \emph{word} of the run $\rho$ is the sequence of labels seen along $\rho$, i.e., $w=\sigma_{i_0}\sigma_{i_1}\cdots$ where for every $j\geq 0$, $q_{i_{j+1}}\in \delta(q_{i_{j}}, \sigma_{i_j})$. As $A$ is nondeterministic a word can have many runs, although every run has a single word. A word is \emph{accepting} if it has an accepting run in $A$. The language $L(A)$ is the set of all accepting words in $A$. Wlog, we assume that all states and edges that are not a part of an accepting run are removed. 
Finally, every LTL formula $\varphi$ can be transformed in exponential time in the length of $\varphi$, to an NBA $A_\varphi$ for which $L(\varphi)=L(A_\varphi)$~\cite{VW94,huth-ryan}. When $\varphi$ is clear from  the context we omit the subscript and refer to $A_\varphi$ as $A$. We denote by $|A|$ the size of an automaton, i.e., its number of states and transitions.

\paragraph{Reactive Synthesis.} A \emph{reactive LTL formula} is an LTL formula $\varphi$ over a set of input variables $I$ and output variables $O$, with $I\cap O=\emptyset$.
In \emph{reactive synthesis} we are given a reactive LTL formula $\varphi$, and the challenge is to synthesize a function, called \emph{strategy}, $f:\Sigma_I^*\rightarrow \Sigma_O$ such that every word $w\in(\Sigma_I\times\Sigma_O)^\omega$ obtained by using this strategy is in $L(\varphi)$. If such a strategy exists we say that $\varphi$ is \emph{realizable}. Otherwise, we say that $\varphi$ is \emph{unrealizable}.  In what follows, we always consider only reactive LTL formulas and hence often omit the prefix reactive while referring to them. The synthesized strategy $f:\Sigma_I^*\rightarrow \Sigma_O$ that is given as output is typically described (explicitly or symbolically) in a form of a \emph{transducer} $T=(\Sigma_I,\Sigma_O,S,s_0,\delta,\lambda)$ in which $\Sigma_I$ and $\Sigma_O$ are input and output alphabet respectively, $S$ is a set of states with an initial state $s_0$, $\delta:S\times\Sigma_I\rightarrow S$ is a deterministic transition function, and $\lambda:S\times\Sigma_I\rightarrow \Sigma_O$ is the output function. A standard procedure in solving reactive synthesis is to transform the given LTL formula $\varphi$ to an NBA $A_\varphi$ for which $L(A_\varphi)=L(\varphi)$.
Then transform $A_\varphi$ to a Deterministic Parity Automata (DPA) that turns to a parity game, which solution is described as a transducer $T_{A_\varphi}$. 
It is known that this approach cannot escape exponential blowups. More precisely,
\begin{theorem}
\begin{enumerate}
    \item Reactive synthesis can be solved in $O(2^{n\cdot 2^{n}})$, where $n$ is the size of the LTL formula.
    \item Given an NBA $A$ with $n$ states, computing transducer $T_A$ takes $\Omega (2^{n\log n})$.
    \end{enumerate} 
\end{theorem}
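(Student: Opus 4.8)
The plan is to prove the two parts separately, each by composing known automata-theoretic constructions and tracking their complexity bounds. For the upper bound in part~(1), I would follow the standard synthesis pipeline described just above the statement and account for the blow-ups carefully. First, convert the LTL formula $\varphi$ of size $n$ into a language-equivalent NBA $A_\varphi$; by the transformation cited in the preliminaries this yields $2^{O(n)}$ states in exponential time. Writing $m = 2^{O(n)}$ for the number of states, the second step is to determinize $A_\varphi$ into a deterministic parity automaton (DPA) via Safra's construction (or a modern variant), producing a DPA with $2^{O(m\log m)}$ states. Substituting $m = 2^{O(n)}$ gives $m\log m = 2^{O(n)}\cdot O(n)$, which has the shape $n\cdot 2^{n}$ up to constants in the exponent, so the DPA has $2^{O(n\cdot 2^{n})}$ states. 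The final step is to view the DPA as a parity game and solve it: since parity games are solvable in time polynomial (for a bounded number of priorities) or quasi-polynomial in the number of states, and a polynomial of $2^{O(n\cdot 2^{n})}$ is again $2^{O(n\cdot 2^{n})}$, the game-solving and strategy-extraction steps add no further exponential. Composing the three steps yields the claimed $O(2^{n\cdot 2^{n}})$.

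For the lower bound in part~(2), the key is to exhibit a family of hard specifications and invoke the classical $2^{\Omega(n\log n)}$ (equivalently, the $n!$-type) lower bound for \buchi determinization and complementation due to Michel and its refinements. First I would fix a family $\{A_n\}$ of NBAs, each with $O(n)$ states, built from the permutation-based ``full automata'' used to show that NBA determinization cannot be done with fewer than $2^{\Omega(n\log n)}$ states. Next, I would argue that the synthesis problem induced by $A_n$ forces any solving transducer $T_{A_n}$ to distinguish all the histories that determinization must distinguish; concretely, the transducer must implicitly maintain a subset-and-ordering (Safra-like) memory, so that no transducer with fewer than $2^{\Omega(n\log n)}$ states realizes the corresponding strategy. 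Since the output $T_{A_n}$ then has size $2^{\Omega(n\log n)}$, any algorithm computing it must spend at least that much time merely to write it down, giving the stated $\Omega(2^{n\log n})$ bound.

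The main obstacle is the lower-bound direction. The upper bound is essentially bookkeeping over well-known constructions, but part~(2) requires either reproving or carefully citing the permutation-automaton lower bound and, more delicately, transferring a size lower bound on \emph{deterministic automata} into a size lower bound on \emph{transducers} for the synthesis problem. The crux is to design the specification so that a small transducer would yield a small deterministic recognizer, contradicting the $n!$ bound; keeping this reduction tight, so that the $n\log n$ in the exponent survives rather than degrading to $O(n)$, is the delicate part.
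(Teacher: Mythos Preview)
The paper does not prove this theorem at all: it appears in the Preliminaries section immediately after the sentence ``It is known that this approach cannot escape exponential blowups. More precisely,'' and is stated as a known fact without proof or even an explicit citation. So there is no ``paper's own proof'' to compare against; the authors are simply recording folklore bounds (the 2EXPTIME upper bound of Pnueli--Rosner for part~(1), and the $n!$-type determinization lower bound of Michel/L\"oding transferred to synthesis for part~(2)).

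Your reconstruction is the standard one and is essentially what any textbook would give. Two small remarks. For part~(1), be a little more careful with the parity-game step: after Safra the number of priorities is $O(m)=2^{O(n)}$, not bounded, so ``polynomial for bounded priorities'' does not apply; you need the quasi-polynomial bound (or the older $N^{O(d)}$ bound together with a more careful accounting) to stay inside $2^{2^{O(n)}}$, and even then the exponent you get is $O(n^2\cdot 2^{n})$ rather than exactly $n\cdot 2^{n}$. This is harmless if one reads $O(2^{n\cdot 2^{n}})$ as shorthand for doubly-exponential, which is almost certainly the intent here. For part~(2), you have correctly identified both the source of the bound and the delicate step (turning a determinization lower bound into a transducer-size lower bound); the cleanest way to close this is to set up the specification as a one-player game whose winning strategy must in particular decide language membership, so any small transducer would yield a small deterministic $\omega$-automaton, contradicting the known $2^{\Omega(n\log n)}$ bound.
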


\section{Dependent variables in reactive LTL}
\label{sec:dependencies}
In this section, we define dependent variables for (reactive) LTL formulas and suggest a framework for finding a maximal set of dependent variables. Our notion of dependent variables for LTL formulas, specifically suited to reactive synthesis, asks that the dependency be maintained at every step of the word satisfying the formula. While there are several notions of dependency that can be considered, we describe the one that we use throughout the paper. As mentioned earlier, our definition of dependency is restricted to output variables, since having dependent input variables would imply that not all input values are possible for these variables, which makes the formula unrealizable.

\begin{definition}[Variable Dependency in LTL]\label{def:LTLDep}
Let $\varphi$ be an LTL formula over $V$ with input variables $I\subseteq V$ and output variables $O=V\backslash I$. Let $X,Y$ be sets of variables where $X\subseteq O$. We say that $X$ is dependent on $Y$ in $\varphi$ if for every pair of words $w,w'\in L(\varphi)$ and $i\geq 0$ if $w[0,i-1]=w'[0,i-1]$ and $w_i.Y=w'_i.Y$, then we have  $w_i.X = w'_i.X$. Further, we say that $X$ is
dependent in $\varphi$ if $X$ is dependent on $V\setminus{X}$ in $\varphi$, i.e., it is dependent on all the remaining variables.
\end{definition}
Note that two words in $L(\varphi)$ with different prefixes can still have different values for $X$ for the same values for $Y$, and $X$ can still be defined dependent on $Y$. 

As an example, consider an LTL formula $\varphi$ with an input variable $y$, an output variable $x$ and a language $L= \{w^1,w^2,w^3\}$ where $w^1= (y,x)^\omega$, $w^2= (\neg y,x)^\omega$ and $w^3=(y,x)(\neg y,x)(y,\neg x)^\omega$. Then $x$ is dependent on $y$ in $\varphi$. Specifically note that $w^1[0,1]\neq w^3[0,1]$ and thus the dependency of $x$ is not violated, although $w^1_2.y=w^3_2.y$ and $w^1_2.x\not =w^3_2.x$.
Observe that, if $X$ is dependent on $Y$ in $\varphi$ for some $Y$, then it is also dependent in $\varphi$. We next show how to find a maximal set of dependent variables.

\subsection{Maximally dependent sets of variables}
\label{sec:findDep}
\begin{definition} 
Given an LTL formula $\varphi(I,O)$, we say that a set $X\subseteq O$ is a \emph{maximal dependent set} in $\varphi$ if $X$ is dependent in $\varphi$ and every set that strictly contains $X$ is not dependent in $\varphi$. 
\end{definition}
As in the propositional case~\cite{SM22}, finding maximum or minimum dependent sets is intractable, hence we focus on subset-maximality. Given a variable $z$ and set $Y$, checking whether $z$ is dependent on $Y$, can easily be used to finding a maximal dependent set. Indeed, we would just need to start from the empty set and iterate over output variables, checking for each if it is dependent on the remaining variables. For completeness, we give the pseudocode for this in \fullversiontext{Appendix~\ref {sec:appdep}}{full-version paper \cite{onDepVarFullVersion}}.

Note that when all output variables are not dependent, the order in which output variables are chosen may play a significant role in the size of the maximal set obtained. We currently use a naive ordering (first appearance), and leave the problem of better heuristics for getting larger maximal independent sets to future work.

\subsection{Finding dependent variables via  automata}\label{sec:autdep}
 As explained above, the heart of the dependency check is to check if a given output variable is dependent on a set of variables. We now develop an approach for doing so based on the nondeterministic \buchi automaton $A_\varphi$ of the original LTL formula $\varphi$. Our framework uses the notion of compatible pairs of states of the automaton  defined as follows.

\begin{definition}\label{def:comptStates}
Let $A=(\Sigma, Q, \delta, q_0, F)$ be an NBA with states $s,s'$ in $Q$. Then the pair $(s,s')$ is \emph{compatible} in $A$ if there are runs from $q_0$ to $s$ and from $q_0$ to $s'$ with the same word $w\in\Sigma^*$.  
\end{definition}
Recall that in our definition, only states and edges that are part of an accepting run exist in $A$. Then we have the following definition.

\begin{definition}\label{def:autdep}
Let $\varphi$ be an LTL formula over $V$ with input variables $I\subseteq V$ and output variables $O=V\backslash I$. Let $X,Y$ be sets of variables where $X\subseteq O$.
Let $A_\varphi$ be an NBA that describes $\varphi$. 
We say that $X$ is \emph{automata dependent} on $Y$ in $A_\varphi$, if for every pair of compatible states $s,s'$ and assignments $\sigma$, $\sigma'$ for $V$, where $\sigma.Y=\sigma'.Y$ and $\sigma.X\neq \sigma'.X$, $\delta(s,\sigma)$ and $\delta(s,\sigma')$ cannot both exist in $A_\varphi$. We say that $X$ is automata dependent in $A_\varphi$ if $X$ is automata dependent on $Y$ in $A_\varphi$ and $Y=V\backslash X$.
\end{definition}
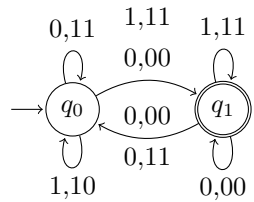
\begin{wrapfigure}[10]{r}{0.35\textwidth}
\begin{tikzpicture}[shorten >=1pt,node distance=2cm,on grid,auto]
  \tikzset{every state/.style={minimum size=0pt}}

  \node[state,initial,initial text={}]  (q_0)   {$q_0$};
  \node[state,accepting]          (q_1) [right=of q_0] {$q_1$};

  \path[->] (q_0) edge [loop above]  node        {0,11} (q_0)
                  edge [bend left]   node [above,swap,label={1,11}] {0,00} (q_1)
                  edge [loop below]  node        {1,10} (q_0)
            (q_1) edge [bend left]  node [label={0,00}] {0,11} (q_0)
                  edge [loop above] node        {1,11} (q_1)
                  edge [loop below]  node       {0,00} (q_1);
\end{tikzpicture}
\caption{An Example NBA $A_1$}
\label{fig:example}
\end{wrapfigure}
As an example, consider NBA $A_1$ from adjoining Figure~\ref{fig:example}, constructed from some LTL formula with input $I=\{i\}$ and outputs $O=\{o_1,o_2\}$. Here $\Sigma_I=\{0,1\}, \Sigma_O=\{0,1\}^2$ and edges are labeled by values of $(i, o_1o_2)$. It is immediate that, $(q_0,q_0), (q_1,q_1)$ are compatible pairs  but so are $(q_0,q_1), (q_1,q_0)$ since they can both be reached from the initial state on reading the length $2$ word $(0,00)(0,00)$. Now consider output $o_1$. It is not dependent on $\{i\}$, i.e., only the input, since from $q_0$ with $i=0$, we can go to different states with different values of $o_1$. But $o_1$ is indeed dependent on $\{i, o_2\}$. To see this consider every pair of compatible states -- in this case all pairs. Then we can see that if we fix the values of $i$ and $o_2$, there is a unique value of $o_1$ that permits state transitions to happen from the compatible pair.  For example, regardless of which state we are in, if $i=0, o_2=0$, $o_1$ must be $0$ for a state transition to happen. On the other hand, $o_2$ is not dependent on either $\{i\}$ or $\{i, o_1\}$ (as can be seen from $(q_0,q_1)$ with $i=1,o_1=1$).

The following shows us the relation between automata dependency and dependency in LTL as defined earlier.

\begin{theorem}\label{thm:autdep}
Let $\varphi$ be an LTL formula with $V=I\cup O$ variables, where $X\subseteq O$ and  $Y\subseteq I\cup O$. Let $A_\varphi$ be an NBA that describes $\varphi$. Then $X$ is dependent on $Y$ in $\varphi$ if and only if $X$ is automata dependent on $Y$ in $A_\varphi$.
\end{theorem}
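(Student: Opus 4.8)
The plan is to prove both directions of the equivalence by relating the semantic condition on words in $L(\varphi)$ (Definition~\ref{def:LTLDep}) to the structural condition on compatible states of $A_\varphi$ (Definition~\ref{def:autdep}), using the fact that $L(A_\varphi)=L(\varphi)$ together with the standing assumption that every state and edge of $A_\varphi$ lies on some accepting run. The crucial bridge is the following correspondence: a prefix $u\in\Sigma^*$ can be extended to a word of $L(\varphi)$ with a particular next letter $\sigma$ \emph{if and only if} there is a state $s$ reachable from $q_0$ by reading $u$ with $\delta(s,\sigma)\neq\emptyset$ (i.e. the edge on $\sigma$ out of $s$ exists). I would state and use this reachability lemma explicitly, since it is what converts statements about words into statements about states.

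First I would prove the contrapositive of the ``LTL-dependent $\Rightarrow$ automata-dependent'' direction. Suppose $X$ is \emph{not} automata dependent on $Y$ in $A_\varphi$. Then there is a compatible pair $(s,s')$ and assignments $\sigma,\sigma'$ with $\sigma.Y=\sigma'.Y$, $\sigma.X\neq\sigma'.X$, and both $\delta(s,\sigma)$ and $\delta(s',\sigma')$ existing. By compatibility, fix a common word $u\in\Sigma^*$ labeling runs $q_0\to s$ and $q_0\to s'$. Because $s,\delta(s,\sigma)$ and $s',\delta(s',\sigma')$ all lie on accepting runs (our standing assumption), I can extend $u\sigma$ through $\delta(s,\sigma)$ to a full accepting word $w$ and extend $u\sigma'$ through $\delta(s',\sigma')$ to a full accepting word $w'$; both lie in $L(A_\varphi)=L(\varphi)$. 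Setting $i=|u|$, we have $w[0,i-1]=w'[0,i-1]=u$, $w_i.Y=\sigma.Y=\sigma'.Y=w'_i.Y$, yet $w_i.X\neq w'_i.X$, which violates LTL-dependency. This gives the reverse implication.

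For the converse (``automata-dependent $\Rightarrow$ LTL-dependent''), I would argue directly. Assume $X$ is automata dependent on $Y$, and take any $w,w'\in L(\varphi)$ with $w[0,i-1]=w'[0,i-1]$ and $w_i.Y=w'_i.Y$. Pick accepting runs $\rho,\rho'$ for $w,w'$ in $A_\varphi$ and let $s,s'$ be the states reached after the common prefix $w[0,i-1]$; since these runs read the same prefix word, $(s,s')$ is a compatible pair. The letters $\sigma=w_i$ and $\sigma'=w'_i$ satisfy $\sigma.Y=\sigma'.Y$, and the edges $\delta(s,\sigma)$, $\delta(s',\sigma')$ both exist (they are the next steps of $\rho,\rho'$). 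Automata dependency then forbids $\sigma.X\neq\sigma'.X$, forcing $w_i.X=w'_i.X$, which is exactly LTL-dependency.

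The step I expect to be the main obstacle is the careful handling of the ``every edge lies on an accepting run'' assumption in the first direction: I must ensure that after choosing the witnessing compatible pair and the diverging letters, the two partial runs can genuinely be completed to \emph{accepting} runs so that the resulting words land in $L(\varphi)$ rather than merely in some prefix-closure. A subtle point is that Definition~\ref{def:autdep} as stated writes $\delta(s,\sigma)$ and $\delta(s,\sigma')$ at the \emph{same} state $s$, whereas the natural witnesses arise at the two (possibly distinct) compatible states $s$ and $s'$; I would note that compatibility lets us treat the pair symmetrically, and reconcile this with the intended reading of the definition. Once that completion argument is nailed down, the remaining manipulations are the routine bookkeeping of prefixes and projections and need no further elaboration.
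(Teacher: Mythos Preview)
Your proposal is correct and follows essentially the same route as the paper: the first direction is argued by contrapositive/contradiction (using the standing assumption that every state and edge lies on an accepting run to complete the witnessing partial runs to words of $L(\varphi)$), and the converse is argued directly via accepting runs of $w,w'$ and the resulting compatible pair. You also correctly flag the apparent slip in Definition~\ref{def:autdep} (where $\delta(s,\sigma')$ should read $\delta(s',\sigma')$); the paper's own proof silently adopts the intended two-state reading, just as you propose.
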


\begin{proof}

 Assume that $X$ is dependent in $\varphi$ and let $s,s’$ be two compatible states in $A$ with a joint history $w^*$. Assume for contradiction that there is an assignment $y$ for $Y$, for which there are distinct assignments $x,x’$ for $X$ for which both $\delta(s,xy)$ exists and $\delta(s’,x’y)$ exist. Then this means that $\delta(s,xy)$ and $\delta(s’,x’y)$ both continue to some accepting runs $r,r$ respectively in $A$. Let $w$ be the label of the $r$ and $w’$ the label of $r’$. Then $w,w’$ are in $L(\varphi)$ with a prefix $w^*$, but $x=w_i.x\neg=w’_i.x=x’$ contradicting $X$ being relaxed dependent in $\varphi$.

 Next, assume that $X$ is automata dependent on $Y$ in $A_\varphi$. Let $w,w'$ be two words in $L(\varphi)$, and thus in $L(A_\varphi)$. Assume that there is $i\geq 0$ for which $u=w[0,i]=w'[0,i]$. Let $r, r'$ be accepting runs of $w,w'$ respectively and let $s_1,s_2$ be the states respectively in the run $r,r'$ that has the same history $w[0,i-1]$. Note that this makes $s_1,s_2$ compatible. Next assume $e_1=(s_1,s_1')$ with a label $\sigma_1$ is on $r$ , and $e_2=(s_2,s_2')$  with a label $\sigma_2$ is on $r'$. Further, assume $\sigma_1.y=\sigma_2.y$. Note that it means that $w_{i+1}.Y=w'_{i+1}.Y$. Then since $X$ is automata dependent in $Y$, we have that $\sigma_1.X=\sigma_2.X$ as well, hence $w_{i+1}.X=w'_{i+1}.X$.\qed
\end{proof}

Thus, we can focus on algorithms for finding automata dependency. We start by describing how to find compatible states.

\paragraph{Finding Compatible States} We find all compatible states in an automaton in Algorithm \ref{alg:all-compatible-states} as follows. We maintain a list of in-process compatible pairs $C$ to which we start by adding the initial pair $(q_0,q_0)$, which is of course compatible. At each step, until $C$ becomes empty, for each pair $(s_i,q_j)\in C$, we add it to the compatible pair set $P$, and remove it from $C$ (in line 4). Then (in lines 5-8), we check (in line 6) if outgoing transitions from $(s_i,s_j)$ lead to a new pair $(s'_i,s'_j)$ not already in $P$ or $C$, which can be reached on reading the same letter $\sigma$ and if so, we add this pair to the in-process set $C$. All pairs that we put in $P,C$ are indeed compatible, nothing is removed from $P$. And when the algorithm terminates, $C$ is empty, which means all possible ways (from initial state pair) to reach a possible compatible pair have been explored, thus showing correctness. 

\begin{algorithm}[htb!]
\caption{Find All Compatible States in NBA}\label{alg:all-compatible-states}
\hspace*{\algorithmicindent} \textbf{Input}
NBA $A_\varphi = (\Sigma, Q, \delta, q_0, F)$ of $\varphi$.\\
\hspace*{\algorithmicindent} \textbf{Output}
Set $P\subseteq Q \times Q$ of all compatible state pairs in $A_\varphi$
\begin{algorithmic}[1]
\State $P \gets \emptyset$; ~$C \gets  \{(q_0,q_0)\}$
\While{$C \neq \emptyset$}
    \State Let $(s_i, s_j) \in C$
    \State $P \gets P \cup \{(s_i,s_j)\}$; ~$C \gets C \setminus \{(s_i, s_j)\}$
    \For{$(s_i', s_j') \in out(s_1) \times out(s_2)$}
        \If{$(s_i',s_j') \notin P \cup C$ and $\exists \sigma \in 2^\Sigma$ s.t. $s_i' \in \delta(s_i, \sigma)  \wedge s_j' \in \delta(s_j, \sigma)$} \label{line:alg-compat-states-check}
            \State $C \gets C \cup \{(s_i',s_j')\}$
        \EndIf
    \EndFor
\EndWhile
\State \textbf{return} $P$    
\end{algorithmic}
\end{algorithm}

Finally, we show how to check dependency using automata, by implementing the following procedure \textbf{isAutomataDependent}, described in Algorithm~\ref{alg:find-deps-by-automaton}. \textbf{isAutomataDependent} works by trying to find a witness to  $\{z\}$ being \emph{not} dependent on $Y$. If no such witness exists then it means that $\{z\}$ is dependent on $Y$. Given a variable $x$ and a set $Y=V\backslash \{z\}$,\textbf{isAutomataDependent} first uses Algorithm~\ref{alg:all-compatible-states} to construct a list $P$ of all compatible pairs in $A$ (line 4). Then for every pair $(s,s')\in P$ the algorithm checks through the procedure $\textsf{AreStatesColliding}$ (lines 1-2) whether there exists an assignment $\sigma,\sigma'$ for which both $\delta(s,\sigma)$ and $\delta(s',\sigma')$ exist, $\sigma. Y=\sigma'. Y$ and $\sigma. \{z\}\neq \sigma'.\{z\}$. If so, then the algorithm returns \emph{false}: $\{z\}$ is not dependent on $Y$ (line 7). At the end of checking all the pairs, the algorithm returns \emph{true}.

\begin{algorithm}[htb!]
\caption{Check Dependency Based Automaton}\label{alg:find-deps-by-automaton}
\hspace*{\algorithmicindent} \textbf{Input} 
\hspace*{\algorithmicindent} NBA $A_\varphi = (\Sigma, Q, \delta, q_0, F)$ from $\varphi$, 
Candidate dependent variable $\{z\}$, 
Candidate dependency set $Y$.\\
\hspace*{\algorithmicindent} \textbf{Output}
Is $\{ z \}$ dependent on $Y$ by Definition \ref{def:autdep}
\begin{algorithmic}[1]
\Procedure{AreStateColliding}{$p, q$}
    \State \textbf{return} $\exists \sigma_p, \sigma_q \in 2^\Sigma$ s.t.
        $ \delta(p, \sigma_p) \neq \emptyset \wedge \delta(q, \sigma_q) \neq \emptyset \wedge \sigma_p.Y = \sigma_q.Y \wedge \sigma_p.\{ z \} \neq \sigma_q.\{ z \}$
\EndProcedure

\State $P \gets FindAllCompatibleStates(A_\varphi)$ 
\For{$(s_1, s_2) \in P$}
    \If{$AreStateColliding(s_1, s_2)$}
        \State \textbf{return} False
    \EndIf
\EndFor

\State \textbf{return} True
\end{algorithmic}
\end{algorithm}

\begin{lemma}
    \label{clm:depAutCorrect}
    Algorithm~\ref{alg:find-deps-by-automaton} returns \emph{True} if and only if $\{z\}$ is automata-dependent on $V\backslash \{z\}$ in $A_\varphi$.
\end{lemma}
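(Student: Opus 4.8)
The plan is to show that Algorithm~\ref{alg:find-deps-by-automaton} is nothing more than a faithful, line-by-line implementation of Definition~\ref{def:autdep} instantiated with $X=\{z\}$ and $Y=V\setminus\{z\}$, and then simply read off the biconditional. I would organize the argument in three small steps, working from the inner subroutine outward.

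First I would pin down the semantics of the subroutine \textsf{AreStateColliding}. Reading its pseudocode (lines 1--2) directly, \textsf{AreStateColliding}$(s,s')$ returns \emph{true} exactly when there exist assignments $\sigma,\sigma'$ for $V$ with $\delta(s,\sigma)\neq\emptyset$, $\delta(s',\sigma')\neq\emptyset$, $\sigma.Y=\sigma'.Y$, and $\sigma.\{z\}\neq\sigma'.\{z\}$. Since ``$\delta(s,\sigma)$ exists'' in Definition~\ref{def:autdep} means precisely $\delta(s,\sigma)\neq\emptyset$, this says \textsf{AreStateColliding}$(s,s')$ returns \emph{true} iff the pair $(s,s')$ \emph{violates} the dependency condition, i.e.\ iff there is a witness of non-dependency at that pair. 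I would also record that this predicate is symmetric in its two arguments: swapping the roles of $\sigma$ and $\sigma'$ and using that both $\sigma.Y=\sigma'.Y$ and $\sigma.\{z\}\neq\sigma'.\{z\}$ are symmetric conditions gives \textsf{AreStateColliding}$(s,s')=$\textsf{AreStateColliding}$(s',s)$. This lets me ignore the orientation of pairs in what follows.

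Second I would invoke the correctness of Algorithm~\ref{alg:all-compatible-states}, argued in the paragraph preceding it, to conclude that the set $P$ computed on line~4 is exactly the set of compatible pairs of $A_\varphi$ in the sense of Definition~\ref{def:comptStates}: every pair placed in $P$ or $C$ is compatible by construction, and the worklist loop halts only once every pair reachable from $(q_0,q_0)$ under a common letter has been explored. Third, I would combine the two. The loop on lines 5--7 returns \emph{False} iff \textsf{AreStateColliding}$(s_1,s_2)$ holds for some $(s_1,s_2)\in P$, and otherwise falls through to \emph{True} on line~8. Hence, using Steps~1 and~2, the algorithm returns \emph{True} iff no compatible pair admits a witness of non-dependency, which is verbatim the universally quantified requirement of Definition~\ref{def:autdep} that $\{z\}$ be automata dependent on $V\setminus\{z\}$. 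The two directions of the lemma are then just the two directions of this equivalence.

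I do not expect a genuinely hard step here; the lemma is a direct unrolling of definitions, so the work is in bookkeeping. The two points that will need care are the quantifier alignment in Step~3 --- I must check that ranging \textsf{AreStateColliding} over the (possibly oriented) pairs stored in $P$ really does cover the unordered ``for every pair of compatible states'' quantification of Definition~\ref{def:autdep}, which is exactly what the symmetry remark from Step~1 buys me --- and the reliance on Algorithm~\ref{alg:all-compatible-states} in Step~2. If a fully self-contained proof is desired, I would discharge the latter by proving the loop invariant ``$P\cup C$ contains only compatible pairs, and at termination $P$ contains all of them'' by induction on the number of worklist iterations, but otherwise I would simply cite the correctness argument already given for that algorithm.
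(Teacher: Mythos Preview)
Your proposal is correct and matches the paper's approach: the paper does not give a formal proof of this lemma, but the informal paragraph preceding it argues exactly along your three steps (semantics of \textsf{AreStateColliding} as a non-dependency witness, correctness of the compatible-pairs computation, and the loop as a search over all compatible pairs). Your added observation about the symmetry of \textsf{AreStateColliding} to reconcile oriented pairs in $P$ with the unordered quantification in Definition~\ref{def:autdep} is a useful piece of bookkeeping that the paper leaves implicit.
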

Thus using the above algorithm to perform dependency check we can compute maximal sets of dependent variables (as explained earlier), which we will use next to improve synthesis. Note that all the above algorithms run in time polynomial (in fact, quadratic) in size of the NBA. 
\begin{corollary} Given NBA $A_\varphi$, computing compatible pairs, checking dependency, building maximal dependent sets can be done in time polynomial in the size of $A_\varphi$.
\end{corollary}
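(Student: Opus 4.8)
The plan is to prove Lemma~\ref{clm:depAutCorrect} by establishing that Algorithm~\ref{alg:find-deps-by-automaton} faithfully implements the condition in Definition~\ref{def:autdep}, and then to bound the running time of each algorithmic component to obtain the corollary. I would treat these as two separate tasks: a correctness argument (the iff in the lemma) and a complexity argument (the polynomial bounds in the corollary).

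For the correctness of the lemma, I would argue both directions via the contrapositive, since the algorithm is explicitly searching for a witness of non-dependency. First I would invoke the correctness of Algorithm~\ref{alg:all-compatible-states}, which the text has already sketched: upon termination $P$ is exactly the set of compatible pairs. Granting this, the main loop of Algorithm~\ref{alg:find-deps-by-automaton} returns \emph{False} exactly when some compatible pair $(s_1,s_2) \in P$ satisfies $\textsf{AreStateColliding}(s_1,s_2)$. Unfolding the procedure, this happens iff there exist assignments $\sigma_p,\sigma_q \in 2^\Sigma$ with $\delta(s_1,\sigma_p)\neq\emptyset$, $\delta(s_2,\sigma_q)\neq\emptyset$, $\sigma_p.Y=\sigma_q.Y$, and $\sigma_p.\{z\}\neq\sigma_q.\{z\}$ (here $Y=V\setminus\{z\}$). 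This is precisely the negation of the universally-quantified condition in Definition~\ref{def:autdep} specialized to $X=\{z\}$: that definition demands that for every compatible pair and every pair of assignments agreeing on $Y$ but differing on $\{z\}$, the two transitions cannot both exist. Hence the algorithm returns \emph{True} iff no such colliding pair is found iff the Definition~\ref{def:autdep} condition holds, giving the iff. The one subtlety I would be careful about is matching the quantifier structure exactly --- in particular that ``$\delta(s,\sigma)$ and $\delta(s',\sigma')$ cannot both exist'' in the definition corresponds to the conjunction $\delta(p,\sigma_p)\neq\emptyset \wedge \delta(q,\sigma_q)\neq\emptyset$ in the procedure --- and to confirm that the assumption ``all states and edges not part of an accepting run are removed'' means a nonempty $\delta$ image genuinely witnesses an extendable run, so colliding really corresponds to the semantic violation.

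For the corollary, I would bound each piece in terms of $|A_\varphi|$ (number of states plus transitions). Algorithm~\ref{alg:all-compatible-states} processes each candidate pair in $Q\times Q$ at most once (since pairs are never removed from $P$ and the membership test $\notin P\cup C$ prevents re-insertion), so there are $O(|Q|^2)$ iterations; each iteration scans $out(s_i)\times out(s_j)$ and performs the letter-compatibility test, which over the whole run is polynomial in the number of transitions, yielding an overall quadratic bound. The dependency check in Algorithm~\ref{alg:find-deps-by-automaton} calls this once and then loops over the (at most $|Q|^2$) compatible pairs, each time running $\textsf{AreStateColliding}$, which is again polynomial in the transition labels out of the two states; so it too is polynomial. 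Finally, building a maximal dependent set, as described in Section~\ref{sec:findDep}, iterates over the $|O|$ output variables and performs one dependency check per variable, contributing only a linear multiplicative factor. Composing these gives the stated polynomial bound.

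The main obstacle I anticipate is not the loop-counting but the handling of the symbolic/BDD-level operations hidden inside the compatibility and collision tests. The checks ``$\exists \sigma$ such that $s_i'\in\delta(s_i,\sigma)\wedge s_j'\in\delta(s_j,\sigma)$'' and the $\textsf{AreStateColliding}$ existential are quantifier eliminations over the alphabet $2^\Sigma$, which is exponential in $|V|$ if handled naively. To keep the bound polynomial in $|A_\varphi|$, I would rely on the fact that in the intended implementation (and consistent with ``size of $A_\varphi$'' counting transitions) these transition conditions are stored as Boolean formulas/BDDs labeling the edges, so the existential projections and the agreement constraint $\sigma_p.Y=\sigma_q.Y$ are carried out by BDD conjunction and existential abstraction whose cost is polynomial in the sizes of the edge labels --- hence polynomial in $|A_\varphi|$ under that representation. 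I would state this representational assumption explicitly, since it is the crux that makes ``polynomial in the size of the NBA'' the honest claim rather than ``polynomial in the number of states times exponential in the number of variables.''
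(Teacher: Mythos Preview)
Your approach is essentially the right one, and in fact goes well beyond what the paper does: the paper offers no proof of the corollary at all, merely the one-line assertion ``Note that all the above algorithms run in time polynomial (in fact, quadratic) in size of the NBA'' immediately preceding the statement. Your decomposition into (i) correctness of Algorithm~\ref{alg:find-deps-by-automaton} via the contrapositive witness search, (ii) the $O(|Q|^2)$ pair-processing bound for Algorithm~\ref{alg:all-compatible-states}, and (iii) the $|O|$-fold iteration for the maximal set, is exactly the intended argument and is sound.

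There is one technical slip worth flagging. You write that the existential checks are ``carried out by BDD conjunction and existential abstraction whose cost is polynomial in the sizes of the edge labels.'' Existential abstraction over ROBDDs is \emph{not} polynomial in general; projecting out $k$ variables can blow the BDD up exponentially. Fortunately the algorithms never need the projected BDD: both the compatibility test in line~6 of Algorithm~\ref{alg:all-compatible-states} and the collision test in Algorithm~\ref{alg:find-deps-by-automaton} are pure \emph{satisfiability} queries. The paper makes this explicit in Section~\ref{sec:ImpNew}: line~6 is satisfiability of $B_{(s_i,s_i')} \wedge B_{(s_j,s_j')}$, and the collision check is satisfiability of a disjunction of conjunctions of a bounded number of edge-label BDDs with the equality constraints. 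Conjunction of a constant number of ROBDDs is polynomial (product of sizes), and satisfiability of an ROBDD is linear, so the bound goes through. Alternatively, and perhaps more in the spirit of Section~\ref{sec:autdep} where the corollary appears, one can simply read $|A_\varphi|$ as the explicit count of states and transitions (as defined in the preliminaries); under that model each transition carries a single letter and the alphabet-level checks are trivially polynomial, with the symbolic subtleties deferred to Section~\ref{sec:ImpNew}. Either reading suffices; just do not lean on polynomial-time existential abstraction, which is the one claim that would not survive scrutiny.
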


Note that if all output variables are dependent, then regardless of the order in which the outputs are considered, for every finite history of inputs, there is a unique value for each output that can cause the specification to be satisfied. Therefore, there is a unique winning strategy for the specification, assuming it is realizable.

\section{Exploiting Dependency in Reactive Synthesis}\label{sec:synt}

\begin{figure}[t]
\centering
\scalebox{0.84}{
\begin{tikzpicture}
\node[] (a) at (-3.7,0) {$\varphi$};
\node[draw, thick, fill=gray!40!white, minimum height=1cm] (ltl) at (-2,0) {1. LTL to NBA};
\node[] (b) at (-0.1,0) {$A_\varphi$};
\node[draw, thick, minimum height=1cm] (dep) at (1.9,0) {2. Identify Dep};
\node[draw, thick, minimum height=1cm] (proj) at (5.5,1) {3. $Proj_{dep}$};
\node[draw, thick, fill=gray!40!white,minimum height=1cm] (ty) at (8.7,1) {4. Syn-Nondep $T_Y$};
\node[draw, thick, minimum height=1cm] (tx) at (5,-1) {5. Syn-Dep $T_X$};
\node[draw, thick, minimum height=1cm,fill=gray!40!white] (merge) at (8.5,-1) {6. Syn-Comb $T$};
\node[] (f) at (10.5,-1) {$f$};

\path[draw,->,thick] (a) -- (ltl);
\path[draw,->,thick] (ltl) -- (b);
\path[draw,->,thick] (b) -- (dep);
\path[draw,->,thick] (dep) --node[above]{$A_\varphi,X,Y$} (proj);
\path[draw,->,thick] (proj) --node[above]{$A'_\varphi,Y$} (ty);
\path[draw,->,thick] (dep) --node[above]{$A_\varphi,X$} (tx);
\path[draw,->,thick] (ty) --node[left]{$f_Y$} (merge);
\path[draw,->,thick] (tx) --node[above]{$f_X$} (merge);
\path[draw,->,thick] (merge) -- (f);


 \end{tikzpicture}
 }
 \label{fig:pipeline}
 \caption{Synthesis using dependencies}
 \vspace*{-3mm}
 \end{figure}
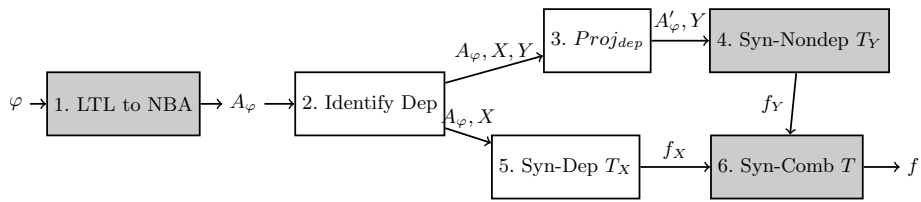
 In this section, we explain how dependencies can be beneficially exploited in a reactive synthesis pipeline.
Our approach can be described at a high level as shown in Figure~\ref{fig:pipeline}.  This flow-chart has the following 6 steps:
\begin{enumerate}
    \item Given an LTL formula $\varphi$ over a set of variables $V$ with input variables $I\subseteq V$ and output variables $O=V\backslash I$, we first construct a language-equivalent NBA $A_\varphi=(\Sigma_I\cup\Sigma_O, S,s_0, \delta, F)$ by standard means, e.g~\cite{VW94}. 
    \item Then, as described in Section \ref{sec:dependencies}, we find in $A_\varphi$ a maximal set of output variables $X$ that are dependent in $\varphi$. For
    notational convenience, in the remainder of the discussion, we use $Y$ for $I\cup (O\backslash X)$ and $\Sigma_Y$ for $\Sigma_I \times \Sigma_{O\setminus X}$.
\item Next, we construct an NBA $A'_\varphi$ from $A_\varphi$ by projecting out (or eliminating) all $X$ variables from labels of transitions. Thus, $A_\varphi'$ has the same sets of states and transitions as $A_\varphi$. We simply remove valuations of variables in $X$ from the label of every state transition in $A_\varphi$ to obtain $A_\varphi'$.  Note that after this step,  $L(A_\varphi') = \{w \mid \exists u \in L(A_\varphi) \mbox{ s.t. } w =  u.Y\} \subseteq \Sigma_Y^\omega$.
\item Treating $A'_\varphi$ as a (automata-based) specification with inputs $I$ and outputs $O\setminus X$, we next use existing reactive synthesis techniques (e.g.,~\cite{BloemCJ18}) to obtain a transducer $T_Y$ that describes a strategy $f_Y:\Sigma_I^*\rightarrow \Sigma_{O\backslash X}$ for $L(A'_\varphi)$.
\item We also construct a transducer $T_X$ that describes a function $f_X:(\Sigma_Y^*\rightarrow \Sigma_X)$ with the following property: for every word $w'\in L(A'_\varphi)$ there exists a unique word $w\in L(\varphi)$ such that $w.Y=w'$ and for all $i$, $w_i.X=f_X(w'[0,i])$.
\item Finally, we compose $T_X$ and $T_Y$ to construct a transducer $T$ that defines the final strategy $f:\Sigma_I^*\rightarrow \Sigma_O$. Recall that transducer $T_Y$ has $I$ as inputs and $O\setminus X$ as outputs, while transducer $T_X$ has $I$ and $O\setminus X$ as inputs and $X$ as outputs.  Composing $T_X$ and $T_Y$ is done by simply connecting the outputs $O\setminus X$ of $T_Y$ to the corresponding inputs of $T_X$.
\end{enumerate}

In the above synthesis flow, we use standard techniques from the literature for Steps 1 and 4, as explained above.  Hence we do not dwell on these steps in detail. Step 2 was already described in detail in Section~\ref{sec:dependencies}. Step 3 is easy when we have an explicit representation of the automata.  As we will discuss in the next section, it has interesting consequences when we use symbolic representations of automata. Step 6, as explained above, is also easy to implement.  Hence, in the remainder of this section, we focus on Step 5, which is also a key contribution of this paper. In the next section, we will discuss how steps 2, 3 and 5 are implemented using symbolic representations (viz. ROBDDs).

\subsubsection{Constructing transducer $T_X$}

Let $A = (\Sigma_I\times \Sigma_O, Q, \delta, q_0, F)$ be the NBA $A_\varphi$ obtained in step 1 of the pipeline shown above. Since each letter in $\Sigma_O$ can be thought of as a pair $(\sigma, \sigma')$, where $\sigma \in \Sigma_{O\setminus X}$ and $\sigma' \in
\Sigma_{X}$, the transition function $\delta$ can be viewed as a map from $Q \times (\Sigma_I \times \Sigma_{O\setminus X} \times \Sigma_X)$ to $2^Q$. The transducer $T_X$ we wish to construct is a deterministic Mealy machine described by the $6$-tuple $(\Sigma_Y, \Sigma_X \cup \{\bot\}, Q^X, q^X_0, \delta^{X}, \lambda^{X})$, where $\Sigma_Y = \Sigma_I \times \Sigma_{(O\setminus X)}$ is the input alphabet, $\Sigma_X$ is the output alphabet with $\bot\not\in \Sigma_X$ being a special symbol that is output when no symbol of $\Sigma_X$ suffices, $Q^X=2^Q$, that is the powerset of $Q$ is the set of states of $T_X$, $q_0^X= \{q_0\}$ is the initial state, $\delta^{X}: Q^X \times
\Sigma_I \times \Sigma_{(O \setminus X)} \rightarrow Q^X$ is the state transition function, and $\lambda^{X}: Q^X \times \Sigma_I \times
\Sigma_{(O \setminus X)} \rightarrow \Sigma_X$ is the output function.
The state transition function $\delta^{X}$ is defined by the Rabin-Scott subset construction applied to the automaton $A_\varphi$~\cite{HopcroftUllman79}. Formally, for every $U \subseteq Q$, $\sigma_I \in \Sigma_I$ and $\sigma \in \Sigma_{(O \setminus X)}$, we define $\delta^{X}\big(U, (\sigma_I, \sigma)\big) = \{q' \mid q' \in Q,\, \exists q \in U$ and $\exists
\sigma' \in \Sigma_X$ s.t. $q' \in \delta\big(q, (\sigma_I, \sigma, \sigma')\big)\}$.  
Before defining the output function $\lambda^{X}$, we state an important property of $T^X$ that follows from the definition of $\delta^X$ above.
\begin{lemma}\label{lem:trans-states-compat}
If $X$ is automata dependent in $A_\varphi$, then every state $U$ reachable from $q^X_0$ in $T_X$ satisfies the property: $\forall q, q' \in U$, $(q,q')$ is compatible in $A_\varphi$.  
\end{lemma}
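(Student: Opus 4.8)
The plan is to prove the invariant by induction on the length of a path in $T_X$ from the initial state $q^X_0 = \{q_0\}$ to the reachable state $U$; equivalently, on the number of applications of $\delta^X$ needed to reach $U$. I would take the induction hypothesis to be exactly the claimed property: every reachable $U$ has all of its pairs $(q,q')$ compatible in $A_\varphi$. For the base case, $U = \{q_0\}$ is reached by the empty word, and its only pair $(q_0,q_0)$ is compatible by Definition~\ref{def:comptStates}, witnessed by the empty run.

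For the inductive step, I would assume $U$ is reachable and satisfies the invariant, and consider $U' = \delta^X\big(U,(\sigma_I,\sigma)\big)$ for some $(\sigma_I,\sigma)\in\Sigma_Y$. Take any $q_1',q_2'\in U'$. By the definition of $\delta^X$ there exist $q_1,q_2\in U$ and $X$-assignments $\sigma_1',\sigma_2'\in\Sigma_X$ with $q_1'\in\delta\big(q_1,(\sigma_I,\sigma,\sigma_1')\big)$ and $q_2'\in\delta\big(q_2,(\sigma_I,\sigma,\sigma_2')\big)$. By the induction hypothesis $(q_1,q_2)$ is compatible. The two transition labels agree on $Y=I\cup(O\setminus X)$, since both project to $(\sigma_I,\sigma)$, and both transitions exist in $A_\varphi$. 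The crux of the argument is to invoke automata dependency (Definition~\ref{def:autdep}) in contrapositive form: if we had $\sigma_1'\neq\sigma_2'$, then these would be two labels that agree on $Y$ but differ on $X$, both firing from the compatible pair $(q_1,q_2)$, which is precisely what automata dependency forbids. Hence $\sigma_1'=\sigma_2'$, so the two full labels coincide; call the common label $\tau$.

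To finish, I would use compatibility of $(q_1,q_2)$ to obtain a common word $w^*\in\Sigma^*$ with runs from $q_0$ to $q_1$ and from $q_0$ to $q_2$; appending the transitions labeled $\tau$ yields runs from $q_0$ to $q_1'$ and from $q_0$ to $q_2'$ on the common word $w^*\tau$, so $(q_1',q_2')$ is compatible, closing the induction.

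The only real obstacle is the bookkeeping in the inductive step: making sure the two transition labels genuinely agree on all of $Y$ (which holds because $\delta^X$ consumes exactly a $\Sigma_Y$-letter and leaves the $\Sigma_X$-component existentially quantified), and that both transitions are actually present in $A_\varphi$ so that automata dependency applies. Everything else is routine; in particular, extending the common witness $w^*$ by the single shared letter $\tau$ is immediate once $\sigma_1'=\sigma_2'$ is established. I would also remark on the minor point that $q_1,q_2$ (resp.\ $q_1',q_2'$) need not be distinct, but the argument is insensitive to this.
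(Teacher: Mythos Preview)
Your proof is correct and follows essentially the same approach as the paper: induction on the number of steps to reach $U$, using automata dependency on the compatible pair $(q_1,q_2)$ to force the $X$-components of the two outgoing labels to coincide, and then concluding compatibility of $(q_1',q_2')$. If anything, your version is slightly more explicit than the paper's in spelling out the final step of extending the common witness word $w^*$ by the shared letter $\tau$.
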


\begin{proof}
We prove by induction on the number of steps, say $k$, needed to reach a state $U$ from $q^X_0$ in $T_X$. The base case follows from the fact that $q^X_0$ is a singleton set, and every singleton set trivially satisfies the desired property.  Therefore the claim holds for $k=0$.

Suppose the claim holds for all states $U$ reachable from $q^X_0$ in $k$ or fewer steps, for $k \ge 0$.  Hence, $(q, q')$ is compatible in $A_\varphi$ for all $q, q' \in U$.  We wish to prove that if $U' = \delta^X\big(U, (\sigma_I, \sigma)\big)$ for any $(\sigma_I, \sigma) \in \Sigma_I \times \Sigma_{O\setminus X}$, then $(s, s')$ is also compatible in $A_\varphi$ for every $s, s' \in U'$. 

Since $X$ is automata dependent in $A_\varphi$, it follows from 
Definition~\ref{def:autdep} that for every $(q, q') \in U \times U$, if $\delta\big(q, (\sigma_I, \sigma, \sigma_X)\big) \neq \emptyset$ and $\delta\big(q', (\sigma_I, \sigma, \sigma_X')\big) \neq \emptyset$ for some $\sigma_X, \sigma_X' \in \Sigma_X$, then $\sigma_X = \sigma_X'$.  From the definition of $\delta^X$, it now follows that if $U' = \delta^X\big(U, (\sigma_I, \sigma)\big)$ and if $U' \neq \emptyset$, then there exists a unique $\sigma_X \in \Sigma_X$ s.t. $U' = \{q' \mid \exists q \in U \mbox{ s.t. } q' \in \delta\big(q, (\sigma_I, \sigma,\sigma_X)\big) \}$.
This shows that for every $s, s' \in U'$, the pair $(s, s')$ is compatible in $A_\varphi$.\qed
\end{proof}

We are now ready to define the output function $\lambda^X$ of $T_X$.  Let $U$ be a state reachable from $q^X_0$ in $T_X$ and let $U' = \delta^{X}\big(U, (\sigma_I, \sigma)\big)$, where $(\sigma_I, \sigma) \in \Sigma_Y$.  If $U' \neq \emptyset$, we know from the proof of Lemma~\ref{lem:trans-states-compat}  that there is a unique $\sigma_X \in \Sigma_X$ s.t. $U' = \{q' \mid \exists q \in U \mbox{ s.t. } q' \in \delta\big(q, (\sigma_I, \sigma,\sigma_X)\big) \}$.  We define
$\lambda^{X}\big(U, (\sigma_I, \sigma)\big) = \sigma_X$ in this case.  If, on the other hand, $U' = \emptyset$, we define $\lambda^{X}\big(U, (\sigma_I, \sigma)\big) = \bot$.

\begin{theorem}\label{thm:synMain}
If $\varphi$ is realizable, the transducer $T$ obtained by composing $T_X$ and $T_Y$ as in step 6 of Fig.~\ref{fig:pipeline} solves the synthesis problem for $\varphi$.
\end{theorem}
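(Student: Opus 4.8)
The plan is to establish that the composed transducer $T$ correctly realizes $\varphi$ by showing that every infinite word produced by running $T$ on an arbitrary input sequence belongs to $L(\varphi)$. I would set up the argument in terms of what $T_Y$ and $T_X$ jointly guarantee. First I would fix an arbitrary input sequence $\alpha \in \Sigma_I^\omega$ and trace how $T$ generates an output word. Since $T_Y$ is synthesized from $A'_\varphi$ using a standard (correct) reactive synthesis procedure, and since $\varphi$ is realizable (which I must first argue implies $A'_\varphi$ is realizable as a specification over inputs $I$ and outputs $O\setminus X$), the word $w' \in \Sigma_Y^\omega$ produced by feeding $\alpha$ into $T_Y$ lies in $L(A'_\varphi)$. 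By the projection property established in Step 3, $L(A'_\varphi) = \{u.Y \mid u \in L(A_\varphi)\}$, so there exists at least one $w \in L(A_\varphi) = L(\varphi)$ with $w.Y = w'$.

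The crux is then to show that $T_X$, run on this same $w'$, reconstructs exactly the $X$-valuations of such a witnessing word $w$, and that the reconstruction is well-defined (never outputs $\bot$) along the actual play. For this I would invoke Lemma~\ref{lem:trans-states-compat}: every state $U$ reachable in $T_X$ is a set of pairwise-compatible states of $A_\varphi$, and by the definition of $\delta^X$ together with automata-dependency (Definition~\ref{def:autdep}), whenever $U' = \delta^X(U,(\sigma_I,\sigma))$ is nonempty there is a \emph{unique} $\sigma_X \in \Sigma_X$ witnessing the transition, which $\lambda^X$ outputs. I would argue by induction along the run that since $w' \in L(A'_\varphi)$, the subset-construction states $U_i$ visited by $T_X$ stay nonempty (each $U_i$ is precisely the set of $A_\varphi$-states reachable on the length-$i$ prefix of some $u$ projecting to $w'[0,i-1]$), so $T_X$ never emits $\bot$ and instead produces the unique $X$-values $\sigma_X^{(i)}$ forced at each step. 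Combining $w'_i$ with $\sigma_X^{(i)}$ yields a word $w$ whose prefixes are all extendable within $A_\varphi$, and a König's-lemma / limit argument shows $w$ is accepted, hence $w \in L(\varphi)$.

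I would also address uniqueness, matching the property promised for $T_X$ in Step 5: because $X$ is dependent on $Y$ in $\varphi$ (Theorem~\ref{thm:autdep} bridges the LTL and automata notions), any two words of $L(\varphi)$ sharing the projection $w'$ must agree on $X$ at every position, so the $w$ recovered above is the unique $\varphi$-word over this $w'$, confirming $f_X$ is well-defined as a genuine function of the $Y$-history. Finally I would note that the composition in Step 6 simply feeds the $O\setminus X$ outputs of $T_Y$ into $T_X$, so the overall strategy $f:\Sigma_I^*\to\Sigma_O$ produces exactly $(\alpha, w')|_{I}$ extended by the $X$-values, i.e. the word $w \in L(\varphi)$, which is what realizability demands.

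The main obstacle I anticipate is the passage from ``every finite prefix is extendable in $A_\varphi$'' to ``the full infinite word $w$ is \emph{accepted} by $A_\varphi$.'' The $Y$-strategy guarantees acceptance of $w'$ in $A'_\varphi$, i.e. the existence of \emph{some} accepting $A_\varphi$-run over \emph{some} lift of $w'$; I must argue that the \emph{particular} lift $w$ that $T_X$ reconstructs is itself accepted, not merely that each prefix is compatible. This is where dependency is essential: since all words of $L(\varphi)$ over the projection $w'$ coincide on $X$, the unique lift $T_X$ builds must equal the accepted witness, so the \buchi (infinitely-often-accepting) condition transfers to $w$. Making this identification precise — ensuring the subset-construction trace of $T_X$ genuinely pins down the accepted run rather than an arbitrary prefix-compatible path — is the delicate point of the proof.
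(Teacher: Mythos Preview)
The paper states Theorem~\ref{thm:synMain} without proof, so there is nothing to compare against; your proposal is in fact supplying the missing argument, and it is essentially correct.

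Your outline hits the three necessary points: (i) realizability of $\varphi$ transfers to the projected specification $A'_\varphi$ (any winning strategy for $\varphi$, post-composed with projection onto $O\setminus X$, wins for $A'_\varphi$), so $T_Y$ exists and the resulting $w'$ lies in $L(A'_\varphi)$; (ii) along $w'$ the subset states $U_i$ of $T_X$ remain nonempty, because any accepting $A_\varphi$-run on a lift $u$ of $w'$ keeps its $i$-th state inside $U_i$ by a straightforward induction using the definition of $\delta^X$; (iii) the $X$-letter output at each step is forced, by Lemma~\ref{lem:trans-states-compat} together with automata dependency, to equal $u_i.X$. Hence the word produced by $T$ is exactly $u\in L(\varphi)$.

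You correctly identify and correctly resolve the only subtle point: one must not argue via ``every prefix is extendable'' and K\"onig's lemma, since \buchi acceptance is not a safety property. The right argument, which you give, is that dependency (via Theorem~\ref{thm:autdep} and Definition~\ref{def:LTLDep}) makes the lift of $w'$ inside $L(\varphi)$ \emph{unique}, so the word $T_X$ reconstructs coincides letter-for-letter with the accepted witness $u$ and therefore inherits its acceptance. That closes the gap; no limit argument is needed.
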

An interesting corollary of the above result is that for realizable specifications with all output variables dependent, we can solve the synthesis problem in time $O(2^{k})$ instead of $\Omega(2^{k \log k})$, where $k = |A_\varphi|$.  This is because the subset construction on $A_\varphi$ suffices to obtain $T_X$, while $A_\varphi$ must be converted to a deterministic parity automaton to solve the synthesis problem in general.

\label{sec:implement}
\section{Symbolic Implementation}\label{sec:ImpNew}

In this section, we describe symbolic implementations of each of the non-shaded blocks in the synthesis flow depicted in Fig.~\ref{fig:pipeline}. Before we delve into the details, a note on the representation of NBAs is relevant.  We use the same representation as used in {\Spot}~\cite{spot} -- a state-of-the-art platform for representing and manipulating LTL formulas and $\omega$-automata.  Specifically, the transition structure of an NBA $A$ is represented as a directed graph, with nodes representing states of $A$, and directed edges representing state transitions. Furthermore, every edge from state $s$ to state $s'$ is labeled by a Boolean function $B_{(s,s')}$ over  $I \cup O$.  The Boolean function can itself be represented in several forms.  We assume it is represented as a Reduced Ordered Binary Decision Diagram (ROBDD)~\cite{bryant1995binary}, as is done in {\Spot}. Each such labeled edge represents a set of state transitions from $s$ to $s'$, with one transition for each satisfying assignment of $B_{(s,s')}$. 

\paragraph{\bfseries Implementing Algorithms~\ref{alg:all-compatible-states} and \ref{alg:find-deps-by-automaton}:} Since states of the NBA $A_\varphi$ are explicitly represented as nodes of a graph, it is straightforward to implement Algorithms~\ref{alg:all-compatible-states} and \ref{alg:find-deps-by-automaton}. The check in line 6 of Algorithm~\ref{alg:all-compatible-states} is implemented by checking the satisfiability of $B_{(s_i,s_i')}(I,O) \,\wedge\, B_{(s_j, s_j')}(I,O)$ using ROBDD operations.  Similarly, the check in line 2 of Algorithm~\ref{alg:find-deps-by-automaton} is implemented by checking the satisfiability of $\bigvee_{(s, s') \in out(p) \times out(q)} B_{(p,s)}(I,O) \wedge B_{(q,s')}(I',O') \wedge \bigwedge_{y\in Y} (y \leftrightarrow y') \wedge (z \leftrightarrow \neg z')$ using ROBDD operations. In the above formula, $I'$ (resp. $O'$) denotes a set of fresh, primed copies of variables in $I$ (resp. $O$).  

\paragraph{\bfseries Implementing transformation of $A_\varphi$ to $A_\varphi'$:}
To obtain $A_\varphi'$, we simply replace the ROBDD for $B_{(s,s')}$ on every edge $(s,s')$ of the NBA $A_\varphi$ by an ROBDD for $\exists X\, B_{(s,s')}$.  While the worst-case complexity of computing $\exists X\, B_{(s,s')}$ using ROBDDs is exponential in $|X|$, this doesn't lead to inefficiencies in practice because $|X|$ is typically small.  Indeed, our experiments reveal that the total size of ROBDDs in the representation of $A_\varphi'$ is invariably smaller, sometimes significantly, compared to the total size of ROBDDs in the representation of $A_\varphi$.  Indeed, this reduction can be significant in some cases, as the following proposition shows. 
\begin{proposition}\label{prop:exp-red-bdd}
There exists an NBA $A_\varphi$ with a single dependent output such that the ROBDD labeling its edge is exponentially (in number of inputs and outputs) larger than that labeling the edge of $A_\varphi'$.
\end{proposition}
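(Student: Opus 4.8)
The key observation is that projecting out a single dependent output $x$ has a very rigid effect on the edge label. On any single edge, automata-dependence of $X=\{x\}$ on $Y=I\cup(O\setminus X)$ means precisely that the label $B_{(s,s')}(I,O)$ is \emph{functional} in $x$: for each assignment to $Y$ at most one value of $x$ satisfies $B_{(s,s')}$, i.e.\ $B_{(s,s')}(Y,0)$ and $B_{(s,s')}(Y,1)$ are never simultaneously true. Hence $B_{(s,s')}$ factors as $D(Y)\wedge\big(x\leftrightarrow h(Y)\big)$, where $D=\exists x\,B_{(s,s')}$ is exactly the label of the corresponding edge of $A_\varphi'$ and $h$ is a witness (Skolem) function. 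So to exhibit an exponential ROBDD blow-up it suffices to make $B_{(s,s')}$ hard for ROBDDs while $D$ stays trivially small; the cleanest way is to take $D\equiv\mathit{true}$, so that $B_{(s,s')}=\big(x\leftrightarrow h(Y)\big)$ and $\exists x\,B_{(s,s')}=\mathit{true}$ has an ROBDD of size $1$, and to push all the complexity into the choice of $h$.

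\textbf{Construction.} I would take $I=\{y_1,\dots,y_n\}$, $O=\{x\}$, and build the one-state NBA $A_\varphi=(\Sigma_I\times\Sigma_O,\{q_0\},\delta,q_0,\{q_0\})$ with a single self-loop on $q_0$ labeled by the Boolean function $B\equiv\big(x\leftrightarrow h(y_1,\dots,y_n)\big)$, where $h$ is a function whose ROBDD has size $2^{\Omega(n)}$ under \emph{every} variable ordering (for instance the hidden weighted bit function~\cite{bryant1995binary}). This is a legitimate NBA in our setting: $L(A_\varphi)=\{w\mid \forall i,\ w_i \models B\}$ is nonempty (for any input stream set $x_i=h(y_i)$ at each step), the lone accepting state $q_0$ is visited infinitely often, and the single edge lies on an accepting run, so nothing is discarded by the standing well-formedness assumption.

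\textbf{Verification and the gap.} Dependence of $x$ is immediate from Definition~\ref{def:autdep}: the only compatible pair is $(q_0,q_0)$, and for any fixed assignment $\sigma.Y$ the unique value $x=h(\sigma.Y)$ is the only one admitting a transition, so $\{x\}$ is automata dependent in $A_\varphi$. After Step~3 of the pipeline the self-loop label becomes $\exists x\,B \equiv \mathit{true}$, whose ROBDD has a single node. For the original edge, I would argue the lower bound by cofactoring: restricting the ROBDD for $B$ by $x=1$ (and simplifying) yields an ROBDD for $h$ over the induced ordering on the $y$-variables, so $|\mathrm{ROBDD}(B)|\ \geq\ |\mathrm{ROBDD}(h)|\ =\ 2^{\Omega(n)}$. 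Combining, the ROBDD labeling the edge of $A_\varphi$ is exponentially larger (in $n=|I|+|O|-1$, hence in the number of inputs and outputs) than the constant-size ROBDD labeling the corresponding edge of $A_\varphi'$, which proves the proposition.

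\textbf{Main obstacle.} The only delicate point is ordering-independence. Since {\Spot} is free to choose (and reorder) the ROBDD variable order, and since that order interleaves $x$ among the $y_i$ arbitrarily, I cannot rely on a function that is merely hard for one fixed ordering, nor assume $x$ sits at a convenient position. This is exactly why $h$ must be an \emph{all-orderings-hard} function such as the hidden weighted bit function, and why the cofactoring argument is phrased so that it survives any placement of $x$: fixing $x$ to a constant removes one variable and leaves an ROBDD for $h$ on the remaining variables in their induced order, to which the all-orderings lower bound still applies. Getting this quantifier-over-orderings argument airtight is the crux; the rest is routine.
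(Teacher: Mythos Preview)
Your proposal is correct and follows essentially the same template as the paper: a one-state NBA whose lone self-loop is labeled $x\leftrightarrow h(Y)$ for an all-orderings-hard Boolean function $h$, so that the label's ROBDD is $2^{\Omega(n)}$ while $\exists x\,B\equiv\mathit{true}$ has a single node. The only cosmetic difference is the choice of $h$ and of $Y$: the paper takes $h$ to be the middle bit of integer multiplication over $2n$ arguments, split as $n$ inputs and $n$ additional (non-dependent) outputs, whereas you use the hidden weighted bit function over $n$ inputs with no extra outputs; your cofactoring argument for the lower bound is also a bit more explicit than the paper's.
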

 \begin{proof}
Let $I = \{i_1, \ldots i_n\}$ and $O = \{o_1, \ldots o_{n+1}\}$ be sets
of input and output variables.
Consider the LTL formula $\mathsf{G} \big(o_{n+1} \leftrightarrow f(i_1, \ldots i_n, o_1, \ldots o_n)\big)$, where $f$ is a Boolean function that gives the value of the $n^{th}$ least significant bit (or middle bit) in the product obtained by multiplying the unsigned integers represented by the $i_1, \ldots i_n$ and $o_1, \ldots o_n$.  The automaton $A_\varphi$ has a single state, which is also an accepting state, with a self loop labeled by an ROBDD representing $o_{n+1} \leftrightarrow f(i_1, \ldots i_n, o_1, \ldots o_n)$. It is known that this ROBDD has size in $\Omega(2^n)$ regardless of the variable ordering. However, once we detect that $o_{n+1}$ is dependent on $\{i_1, \ldots i_n\} \cup \{o_1, \ldots o_n\}$, we can project away $o_{n+1}$, and the ROBDD after projection is simply a single node representing $\mathsf{True}$. This is because for every combination of values of $\{i_1, \ldots i_n\} \cup \{o_1, \ldots o_n\}$, there is a value of $o_{n+1}$ that matches $f(i_1, \ldots i_n, o_1, \ldots o_n)$. \end{proof}

\paragraph{\bfseries Implementing transducer $T_X$:}
We now describe how to construct a Mealy machine corresponding to the transducer $T_X$.  As explained in the previous section, the transition structure of the Mealy machine is obtained by applying the subset construction to $A_\varphi$.  While this requires $O(2^{|A_\varphi|})$ time if states and transitions are explicitly represented, we show below that a sequential circuit implementing the Mealy machine can be constructed directly from $A_\varphi$ in time polynomial in $|X|$ and $|A_\varphi|$. This reduction in construction complexity crucially relies on the fact that all variables in $X$ are dependent on $I \cup (O \setminus X)$.

Let $S = \{s_0, \ldots s_{k-1}\}$ be the set of states of $A_{\varphi}$, and let $\inc{s_i}$ denote the set of states that have an outgoing transition to $s_i$ in $A_\varphi$.  To implement the desired Mealy machine, we construct a sequential circuit with $k$ state-holding flip-flops. Every state $U ~(\subseteq S)$ of the Mealy machine is represented by the state of these $k$ flip-flops, i.e. by a $k$-dimensional Boolean vector.  Specifically, the $i^{th}$ component is set to $1$ iff $s_i \in U$.  For example, if $S = \{s_0, s_1, s_2\}$ and $U = \{s_0, s_2\}$, then
$U$ is represented by the vector $\langle 1,0,1\rangle$. Let $n_i$ and $p_i$ denote the next-state input and present-state output of the $i^{th}$ flip-flop. The next-state function $\delta^X$ of the Mealy machine is implemented by a circuit, say $\Delta^{X}$, with inputs $\{p_0, \ldots p_{k-1}\} \,\cup\, I \,\cup\, (O \setminus X)$ and outputs $\{n_0, \ldots n_{k-1}\}$.  For $i \in \{0, \ldots k-1\}$,  output $n_{i}$ of this circuit implements the Boolean function $\bigvee_{s_j\,\in\, \inc{s_i}} \big(p_j \wedge \exists X\, B_{(s_j,s_i)} \big)$.
To see why this works, suppose $\langle p_0, \ldots p_{k-1}\rangle$ represents the current state $U \subseteq S$ of the Mealy machine.  Then the above function sets $n_i$ to true iff there is a state $s_j \in U$ (i.e. $p_j = 1$) s.t. there is a transition from $s_j$ to $s_i$ on some values of outputs $X$ and for the given values of $I \cup (O \setminus X)$ (i.e. $\exists X\, B_{(s_j,s_i)} \,=\, 1$).  This is exactly the condition for $s_i$ to be present in the state $U' \subseteq S$ reached from $U$ for the given values of $I \cup (O \setminus X)$ in the Mealy machine obtained by subset construction.

It is known from the knowledge compilation literature (see e.g.~\cite{darwiche-jacm,akshayCAV18,fmcad19}) that every ROBDD can be compiled in linear time to a Boolean circuit in Decomposable Negation Normal Form (\DNNF), and that every {\DNNF} circuit admits linear time projection of variables, yielding a resultant {\DNNF} circuit.  Hence, a Boolean circuit for $\exists X\, B_{(s_j,s_i)}$ can be constructed in time linear in the size of the ROBDD representation of $B_{(s_j,s_i)}$.  This allows us to construct the circuit $\Delta^X$, implementing the next-state transition logic of our Mealy machine, in time (and space) linear in $|X|$ and $|A_\varphi|$. 

Next, we turn to constructing a circuit $\Lambda^X$ that implements the output function $\lambda^X$ of our Mealy machine.  It is clear that $\Lambda^X$ must have inputs $\{p_0, \ldots p_{k-1}\} \cup I \cup (O\setminus X)$ and outputs $X$. Since $X$ is automata dependent on $I \cup (O \setminus X)$ in $A_\varphi$, the following proposition is easily seen to hold.
\begin{proposition}\label{prop:uniq-output}
Let $B_{(s,s')}$ be a Boolean function with support $I \cup O$ that labels a transition
$(s,s')$ in $A_\varphi$. For every $(\sigma_I, \sigma) \in \Sigma_I \times \Sigma_{O \setminus X}$, if $(\sigma_I, \sigma) \models \exists X\, B_{(s,s')}$, then there is a unique $\sigma' \in \Sigma_X$ such that $(\sigma_I,\sigma,\sigma') \models B_{(s,s')}$. 
\end{proposition}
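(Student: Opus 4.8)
The plan is to prove Proposition~\ref{prop:uniq-output} as a direct restatement of the automata-dependency of $X$ in the symbolic (ROBDD) setting, invoking Definition~\ref{def:autdep} specialized to the case of a single transition (equivalently, to the compatible pair $(s,s)$). First I would fix a transition $(s,s')$ with label $B_{(s,s')}$ having support $I\cup O$, and fix some $(\sigma_I,\sigma)\in\Sigma_I\times\Sigma_{O\setminus X}$ satisfying $(\sigma_I,\sigma)\models\exists X\,B_{(s,s')}$. Unfolding the meaning of existential quantification over the ROBDD, this says precisely that there is at least one $\sigma'\in\Sigma_X$ with $(\sigma_I,\sigma,\sigma')\models B_{(s,s')}$, which in automata terms means $s'\in\delta\big(s,(\sigma_I,\sigma,\sigma')\big)$. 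So existence of a witness is immediate; the entire content of the proposition is the \emph{uniqueness} of $\sigma'$.

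For uniqueness I would argue by contradiction: suppose there were two distinct assignments $\sigma'_1\neq\sigma'_2$ in $\Sigma_X$ such that both $(\sigma_I,\sigma,\sigma'_1)\models B_{(s,s')}$ and $(\sigma_I,\sigma,\sigma'_2)\models B_{(s,s')}$. Setting $\tau_1=(\sigma_I,\sigma,\sigma'_1)$ and $\tau_2=(\sigma_I,\sigma,\sigma'_2)$, these are full assignments to $V=I\cup O$ that agree on all of $Y=I\cup(O\setminus X)$ (since they share $\sigma_I$ and $\sigma$) but differ on $X$ (since $\sigma'_1\neq\sigma'_2$). Both induce existing transitions out of $s$, namely $\delta(s,\tau_1)\neq\emptyset$ and $\delta(s,\tau_2)\neq\emptyset$. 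The key observation is that $(s,s)$ is a compatible pair in $A_\varphi$ by Definition~\ref{def:comptStates}, since any run reaching $s$ serves as a common history for both copies. Then automata dependency of $X$ on $Y$ (Definition~\ref{def:autdep}), applied to the compatible pair $(s,s)$ and the assignments $\tau_1,\tau_2$ with $\tau_1.Y=\tau_2.Y$ and $\tau_1.X\neq\tau_2.X$, forbids $\delta(s,\tau_1)$ and $\delta(s,\tau_2)$ from both existing. This contradicts what we assumed, so $\sigma'_1=\sigma'_2$, establishing uniqueness.

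The proof is essentially a bookkeeping exercise, so there is no deep obstacle; the only point requiring a moment of care is the invocation of compatibility for the pair $(s,s)$ rather than a genuinely distinct pair. The cleanest way to handle this is to note that Definition~\ref{def:comptStates} only requires runs from $q_0$ to $s$ and from $q_0$ to $s'$ with the same word, and a single run to $s$ trivially witnesses $(s,s)$ as compatible (the same word serves for both). Since the standing assumption in the excerpt is that every state lies on some accepting run, such a run to $s$ exists, so $(s,s)$ is indeed compatible and Definition~\ref{def:autdep} applies. I would phrase the argument so that it manifestly uses only the single transition $(s,s')$ and the already-established automata dependency, making clear that the proposition is simply the edge-local shadow of the global dependency property.
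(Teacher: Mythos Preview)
Your proposal is correct and matches the paper's approach: the paper does not give an explicit proof but simply remarks that the proposition ``is easily seen to hold'' because $X$ is automata dependent on $I\cup(O\setminus X)$ in $A_\varphi$, which is exactly the fact you unpack via the compatible pair $(s,s)$. Your care in verifying that $(s,s)$ is compatible (using the standing assumption that every state lies on an accepting run) is the only detail needed to make the implicit argument rigorous.
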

Considering only the transition $(s, s')$ referred to in Proposition~\ref{prop:uniq-output}, we first discuss how to synthesize a vector of Boolean functions, say $F^{(s,s')} = \langle F_1^{(s,s')}, \ldots F_{|X|}^{(s,s')}\rangle$, where each component function has support $I \cup (O\setminus X)$, such that
$F^{(s,s')}[I \mapsto \sigma_I][O\setminus X \mapsto \sigma] = \sigma'$.  Generalizing beyond the specific assignment of $I \cup O$, our task effectively reduces to synthesizing an $|X|$-dimensional vector of Boolean functions $F^{(s,s')}$ s.t. $\forall I \cup (O\setminus X)\, \big(\exists X B_{(s,s')} ~\rightarrow~ B_{(s,s')}[X \mapsto F^{(s,s')}]\big)$ holds. 
Interestingly, this is an instance of \emph{Boolean functional synthesis} -- a problem that has been extensively studied in the recent past (see e.g.~\cite{ChakrabortyFTV22,akshayCAV18,fmcad19,ACHLTI22,AmramBFTVW21}). In fact, we know from ~\cite{fmcad19,ShahBAC21} that if $B_{(s,s')}$ is represented as an ROBDD, then a Boolean circuit for $F_{(s,s')}$ can be constructed in $\mathcal{O}\big(|X|^2.|B_{(s,s')}|\big)$ time, where $|B_{(s,s')}|$ denotes the size of the ROBDD for $B_{(s,s')}$.  For every $x_i \in X$, we use this technique to construct a Boolean circuit for $F_i^{(s,s')}$ for every edge $(s,s')$ in $A$.  The overall circuit $\Lambda^X$ is constructed such that the output for $x_i \in X$ implements the function $\bigvee_{trans. ~(s,s') ~in~ A} \big(p_s \wedge (B_{(s,s')}[X \mapsto F^{(s,s')}]) \wedge F_i^{(s,s')}\big)$.

\begin{lemma}\label{lem:lambda-imp-correct}
Let $U \subseteq S$ be a non-empty set of pairwise compatible states of $A$. For $(\sigma_I, \sigma) \in \Sigma_I \times \Sigma_{O\setminus X}$, if $\delta^X\big(U, (\sigma_I, \sigma)\big) \neq \emptyset$, then the outputs $X$ of $\Lambda^X$ evaluate to $\lambda^X\big(U, (\sigma_I,\sigma)\big)$.
In all other cases, every output of $\Lambda^X$ evaluates to $0$.
\end{lemma}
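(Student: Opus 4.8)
The plan is to verify the two claims of the lemma directly by unwinding the definition of the circuit $\Lambda^X$ for each output $x_i \in X$, namely the function $\bigvee_{(s,s')\text{ in }A}\big(p_s \wedge B_{(s,s')}[X \mapsto F^{(s,s')}] \wedge F_i^{(s,s')}\big)$, evaluated at the present-state vector encoding $U$ and input $(\sigma_I,\sigma)$. Recall that $p_s$ is true exactly when $s \in U$, so a disjunct for edge $(s,s')$ can contribute only when $s \in U$. First I would establish the key structural fact, available from Lemma~\ref{lem:trans-states-compat} together with automata dependency: since $U$ is a set of pairwise compatible states and $\delta^X\big(U,(\sigma_I,\sigma)\big)\neq\emptyset$, there is a \emph{unique} $\sigma_X \in \Sigma_X$ such that, over all $s\in U$ and all outgoing edges $(s,s')$ with $(\sigma_I,\sigma)\models \exists X\, B_{(s,s')}$, the completing $X$-value witnessing the transition is exactly $\sigma_X$; this is precisely $\lambda^X\big(U,(\sigma_I,\sigma)\big)$ by the definition of $\lambda^X$. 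Proposition~\ref{prop:uniq-output} supplies uniqueness of the completing $\sigma'$ on each individual edge, and automata dependency forces all these per-edge witnesses (across the compatible states in $U$) to agree.

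Next I would analyze a single disjunct. For an edge $(s,s')$ with $s\in U$, the term $B_{(s,s')}[X\mapsto F^{(s,s')}]$ evaluated at $(\sigma_I,\sigma)$ is true iff $(\sigma_I,\sigma)\models \exists X\, B_{(s,s')}$: by the defining specification of the Boolean-functional-synthesis vector $F^{(s,s')}$, namely $\forall I\cup(O\setminus X)\,\big(\exists X\,B_{(s,s')} \rightarrow B_{(s,s')}[X\mapsto F^{(s,s')}]\big)$, substituting $F^{(s,s')}$ realizes the unique completing value whenever one exists, so this factor acts as an indicator for $(\sigma_I,\sigma)\models\exists X\,B_{(s,s')}$. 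When that factor is true, the remaining factor $F_i^{(s,s')}$ outputs the $i$-th bit of the unique completing $\sigma'$ for that edge, which by the uniqueness argument above equals the $i$-th bit of $\sigma_X = \lambda^X\big(U,(\sigma_I,\sigma)\big)$. Hence every disjunct that is active produces exactly the $i$-th bit of $\sigma_X$, and since $\delta^X\big(U,(\sigma_I,\sigma)\big)\neq\emptyset$ at least one disjunct is active; taking the disjunction over all edges therefore yields precisely $(\sigma_X)_i$, establishing the first claim.

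For the second claim, when $\delta^X\big(U,(\sigma_I,\sigma)\big)=\emptyset$, no state $s\in U$ has an outgoing edge satisfiable under $(\sigma_I,\sigma)$, so for every edge $(s,s')$ with $s\in U$ we have $(\sigma_I,\sigma)\not\models \exists X\,B_{(s,s')}$, making the indicator factor $B_{(s,s')}[X\mapsto F^{(s,s')}]$ false; edges with $s\notin U$ are killed by $p_s=0$. Thus every disjunct is $0$ and each output of $\Lambda^X$ evaluates to $0$, as required.

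I expect the main obstacle to be the step asserting that $B_{(s,s')}[X\mapsto F^{(s,s')}]$ is a faithful indicator of $(\sigma_I,\sigma)\models\exists X\,B_{(s,s')}$ and simultaneously that $F_i^{(s,s')}$ carries the correct bit: one must argue carefully that the Boolean functional synthesis specification pins down $F^{(s,s')}$ on exactly the inputs in the domain of $\exists X\, B_{(s,s')}$, and invoke automata dependency (via Lemma~\ref{lem:trans-states-compat}) to guarantee that the per-edge witnesses agree across all active disjuncts so that the disjunction does not mix distinct $X$-values. The rest is bookkeeping over the disjunction.
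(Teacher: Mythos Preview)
The paper states Lemma~\ref{lem:lambda-imp-correct} without proof; the surrounding text (Proposition~\ref{prop:uniq-output}, the definition of $\lambda^X$ after Lemma~\ref{lem:trans-states-compat}, and the construction of $\Lambda^X$) supplies exactly the ingredients you use, so your direct verification is the intended argument. Your proposal is correct: the key points---that $B_{(s,s')}[X\mapsto F^{(s,s')}]$ is equivalent to $\exists X\,B_{(s,s')}$ by the Boolean functional synthesis specification, that Proposition~\ref{prop:uniq-output} gives a unique $X$-completion per edge, and that automata dependency on the pairwise-compatible set $U$ forces all active edges to agree on the same $\sigma_X$---are all sound and together establish both clauses of the lemma.

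One minor remark: you cite Lemma~\ref{lem:trans-states-compat} for the uniqueness of $\sigma_X$ over $U$, but that lemma is about reachable states of $T_X$, whereas here $U$ is only assumed to be pairwise compatible. The argument you actually need (and which you also state) is the direct application of Definition~\ref{def:autdep} to any compatible pair $(s_1,s_2)\in U\times U$; this is what the proof of Lemma~\ref{lem:trans-states-compat} uses internally, so the reasoning is the same, but the citation should be to automata dependency itself rather than to Lemma~\ref{lem:trans-states-compat}.
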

Note that $\delta^X\big(U, (\sigma_I, \sigma)\big) = \emptyset$ iff all outputs $n_i$ of the circuit $\Delta^X$ evaluate to $0$. This case can be easily detected by checking if $\bigvee_{i=0}^{k-1} n_i$ evaluates to $0$. We therefore have the following result.

\begin{theorem}
The sequential circuit obtained with $\Delta^X$ as next-state function and $\Lambda^X$ as output function is a correct implementation of transducer $T_X$, assuming (a) the initial state is $p_0 = 1$ and $p_j = 0$ for all $j \in \{1, \ldots k-1\}$, and (b) the output is interpreted as $\bot$ whenever $\bigvee_{i=0}^{k-1} n_i$ evaluates to $0$. 
\end{theorem}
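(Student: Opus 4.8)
The plan is to prove correctness by induction on the number of clock cycles, maintaining a simulation invariant between the Boolean flip-flop configuration of the sequential circuit and the abstract state of the Mealy machine $T_X$. Concretely, I would carry the invariant that after the circuit has consumed any input prefix $w' \in \Sigma_Y^*$, the present-state vector $\langle p_0, \ldots, p_{k-1}\rangle$ encodes exactly the state $U \subseteq S$ that $T_X$ reaches from $q^X_0$ on $w'$ (with $p_i = 1$ iff $s_i \in U$), and in addition that this $U$ is reachable from $q^X_0$ in $T_X$. The base case is immediate from assumption (a): the vector with $p_0 = 1$ and $p_j = 0$ for $j \ge 1$ encodes $\{q_0\} = q^X_0$, which is trivially reachable.

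For the inductive step I would suppose the invariant holds for a prefix reaching state $U$ and fix an input letter $(\sigma_I, \sigma) \in \Sigma_Y$. First I would invoke the already-established correctness of the next-state logic: the output $n_i = \bigvee_{s_j \in \inc{s_i}} \big(p_j \wedge \exists X\, B_{(s_j,s_i)}\big)$ evaluates to $1$ precisely when some $s_j \in U$ has a transition to $s_i$ consistent with the given values of $I \cup (O \setminus X)$, so the vector $\langle n_0, \ldots, n_{k-1}\rangle$ encodes exactly $U' = \delta^X\big(U, (\sigma_I, \sigma)\big)$. After the clock edge this becomes the new present state, preserving the encoding part of the invariant; the reachability part is preserved because $U'$ is reached from the reachable $U$ in a single step.

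Next I would handle the output in the same cycle. The essential point is that the precondition of Lemma~\ref{lem:lambda-imp-correct} is satisfied: since $U$ is reachable from $q^X_0$, Lemma~\ref{lem:trans-states-compat} guarantees that the members of $U$ are pairwise compatible in $A_\varphi$. Applying Lemma~\ref{lem:lambda-imp-correct} then gives that, when $U' \neq \emptyset$, the $X$-outputs of $\Lambda^X$ evaluate to $\lambda^X\big(U, (\sigma_I, \sigma)\big)$, matching the Mealy output, and that when $U' = \emptyset$ every output of $\Lambda^X$ evaluates to $0$. In the latter case all $n_i$ are $0$, so $\bigvee_{i=0}^{k-1} n_i = 0$, and by the interpretation convention (b) the circuit output is read as $\bot$, which equals $\lambda^X\big(U, (\sigma_I, \sigma)\big)$ by the definition of $\lambda^X$ when $\delta^X$ yields the empty set. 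Thus in every cycle the circuit reproduces both the state transition and the output of $T_X$, and the induction closes, establishing that the circuit realizes the same input–output function as $T_X$.

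The crucial hinge, and where the argument could fail if one is not careful, is discharging the pairwise-compatibility precondition of Lemma~\ref{lem:lambda-imp-correct}: the output circuit $\Lambda^X$ is only guaranteed to emit the unique correct $\sigma_X$ on states whose members are mutually compatible, so the simulation would break down if $T_X$ could reach a subset containing incompatible states. This is exactly why the reachability clause must be threaded through the induction and why Lemma~\ref{lem:trans-states-compat} (which itself depends on $X$ being automata dependent in $A_\varphi$) must be invoked at each step. Everything else is routine simulation bookkeeping that stitches together the previously argued correctness of $\Delta^X$ with Lemma~\ref{lem:lambda-imp-correct} and the two interpretation conventions of the theorem statement.
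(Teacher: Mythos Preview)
The paper does not give an explicit proof of this theorem; it is stated as an immediate consequence of the construction of $\Delta^X$, the paragraph arguing its correctness, and Lemma~\ref{lem:lambda-imp-correct}, prefaced only by ``We therefore have the following result.'' Your inductive simulation argument is exactly the natural way to make this implicit reasoning precise, and it is correct.

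One small point worth tightening: Lemma~\ref{lem:lambda-imp-correct} is stated for \emph{non-empty} $U$, so strictly speaking your inductive step should split on whether the current $U$ is empty. This is harmless---once $U=\emptyset$ all $p_i=0$, hence every $n_i$ (a disjunction of terms each containing some $p_j$) evaluates to $0$, the next state is again $\emptyset$, and convention~(b) yields $\bot=\lambda^X(\emptyset,\cdot)$---but mentioning it would make the case analysis complete.
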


\section{Experiments and Evaluation}
\label{sec:eval}
We implemented the synthesis pipeline depicted in Figure~\ref{fig:pipeline} in a tool called {\DepSynt}, using symbolic approach of Section~\ref{sec:ImpNew}. For steps 1.,  4., of the pipeline, i.e., construction of $A_\varphi$ and synthesis of $T_Y$, we used the tool {\Spot}~\cite{spot}, a widely used library for representing and manipulating NBAs. We then experimented
 \footnote{The experiment results can be accessed at \url{https://eliyaoo32.github.io/DepSynt}} 
with all available reactive synthesis benchmarks from the SYNTCOMP~\cite{syntComp} competition, a total of 1,141 LTL specifications over 31 benchmark families.

 All our experiments were run on a computer cluster, with each problem instance run on an Intel Xeon Gold 6130 CPU clocking at 2.1 GHz with 2GB memory and running Rocky Linux 8.6.
Our investigation was focussed on answering two main research questions:\\
{\bfseries RQ1:} How prevalent are dependent outputs in reactive synthesis benchmarks?\\
{\bfseries RQ2:} Under what conditions, if any, is reactive synthesis benefited by our approach, i.e., of identifying and separately processing dependent output variables?

\subsection{Dependency Prevalence}
To answer {\bfseries RQ1}, we implemented the algorithm in Section~\ref{sec:dependencies} and executed it with a timeout of 1 hour. Within this time, we were able to find 300 benchmarks out of 1,141 SYNTCOMP benchmarks, that had at least 1 dependent output variable (as per Definition~\ref{def:autdep}). Out of the 1,141 benchmarks, 260 had either timeout (41 total) or out-of-memory (219 total), out of which 227 failed because of the NBA construction (adapted from \Spot), i.e, Step 1 in our pipeline, did not terminate. We found that all the benchmarks with at least 1 dependent variable in fact belong to one of 5 benchmark families, as seen in Table~\ref{tab:deps_summary}. In order to measure the prevalence of dependency we evaluated 
(1) the number of dependent variables and
(2) the $\text{dependency ratio} = \frac{\text{Total dependent vars}}{\text{Total output vars}}$.
\begin{table}[h!]
    \centering
    \begin{tabular}{|l|r|r|r|r|}
        \hline
        \textbf{Benchmark Family} & \textbf{Total} & \textbf{Completed} & \textbf{Found Dep} & \textbf{Avg Dep Ratio}\\
        \hline
ltl2dpa & 24 & 24 & 24 & $.434$ \\
mux & 12 & 12 & 4 & $1$\\
shift & 11 & 4 & 4 & $1$\\
tsl-paper & 118 & 117 & 115 & $.46$\\
tsl-smart-home-jarvis & 189 & 167 & 153 & $.33$\\
        \hline
    \end{tabular}
    \caption{Summary for 5 benchmark families, indicating the no. of benchmarks, where the dependency-finding process was completed, the total count of benchmarks with dependent variables, and the average dependency ratio among those with dependencies.}
    \label{tab:deps_summary}

\end{table}
Out of those depicted, Mux (for multiplexer) and shift (for shift-operator operator) were two benchmark families where dependency ratio was 1.
In total, among all those where our dependency checking algorithm terminated, we found 26 benchmarks with all the output variables dependent. Of these 4 benchmarks were from Shift, 4 benchmarks from mux, 14 benchmarks from tsl-paper, and 4 from tsl-smart-home-jarvis. Note that in Mux, total dependency was found only for 4 out of 11, where the input was a power of 2.  Looking beyond total dependency, among the 300 benchmarks with at least 1 dependent variable, we found a diverse distribution of dependent variables and ratio as shown in Figure~\ref{fig:cum-deps-by-total-deps}.
\begin{figure}[h!]
    \centering
    \includegraphics[width=0.45\textwidth]{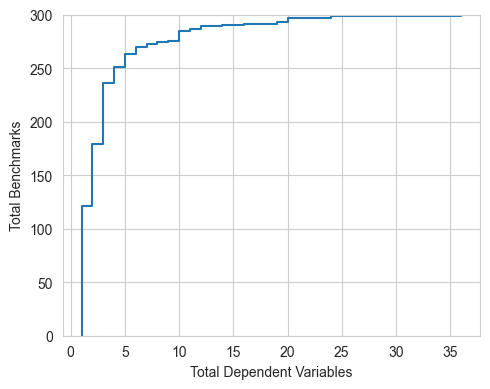}
    \includegraphics[width=0.45\textwidth]{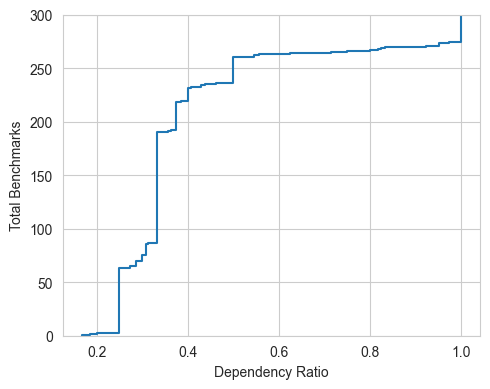}
    \caption{ 
      Cumulative count of benchmarks for each unique value of Total Dependent Variables (left plot) and Dependency Ratio (right plot). The value of $F(x)$ on y-axis represents how many benchmarks have at most $x$ (on x-axis) dependent variables/dependency ratio.}
    \label{fig:cum-deps-by-total-deps}
\end{figure}

\subsection{Utilizing Dependency for Reactive Synthesis: Comparison with other tools}
 Despite a large 1 hr time out, we noticed that most dependent variables were found within 10-12 seconds. Hence, in our tool {\DepSynt}, we limited the time for dependency-check to an empirically determined 12 seconds, and declared unchecked variables after this time as non-dependent. Since synthesis of non-dependents $T_Y$ (Step 5. of the pipeline) is implemented directly using {\Spot} APIs, the difference between our approach and {\Spot} is minimal when there are a large number of non-dependent variables. This motivated us to divide our experimental comparison, among the 300 benchmarks where at least one dependent variables was found, into benchmarks with at most 3 non-dependent variables (162 benchmarks) and more than 3 non-dependent variables (138 benchmarks). We compared {\DepSynt} with two state-of-the-art synthesis tools, that won in different tracks of SYNTCOMP23' \cite{syntComp}: 
(i) Ltlsynt (based on Spot)\cite{spot} with different configurations ACD, SD, DS, LAR, and
(ii) Strix\cite{meyer2018strix} with the configuration of BFS for exploration and FPI as parity game solver (the overall winning configuration/tool in SYNTCOMP'23). All the tools had a total timeout of 3 hours per benchmark. As can be seen from Figure~\ref{fig:depsynt_vs_tools_indepsle3}, indeed for the case of $\leq 3$ non-dependent variables, {\DepSynt} outperforms the highly optimized competition-winning tools. Even for $>3$ case, as shown in Figure~\ref{fig:depsynt_vs_tools_indepsg3}, the performance of $\DepSynt$ is comparable to other tools, only beaten eventually by Strix. 
DepSynt uniquely solved 2 specifications for which both Strix and Ltlsynt timed out after 3600s, the benchmarks are mux32, and mux64, and solved in 2ms, and 4ms respectively.
\begin{figure}[h!]
    \centering
    \begin{minipage}{0.75\textwidth}
        \centering
        \includegraphics[width=\linewidth]{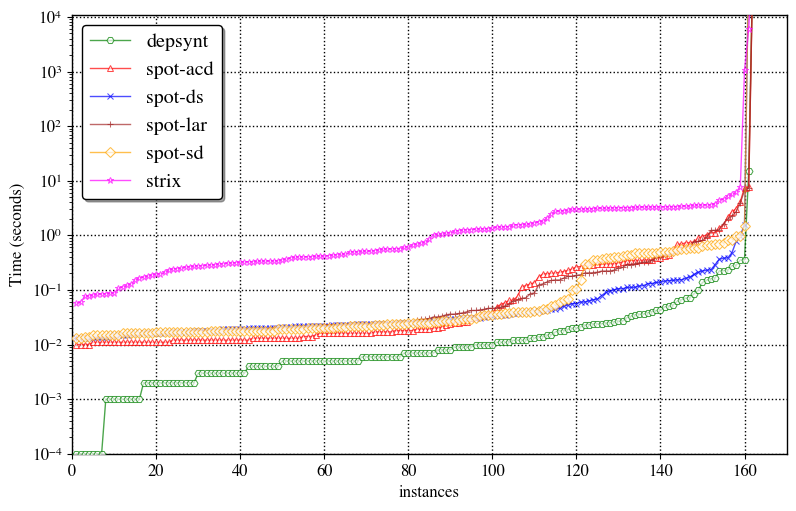}
        \caption{Cactus plot comparing DepSynt, LtlSynt, and Strix on 162 benchmarks with at most 3 non-dependent variables.}
        \label{fig:depsynt_vs_tools_indepsle3}
    \end{minipage}
    \hfill
    \begin{minipage}{0.75\textwidth}
        \centering
        \includegraphics[width=\textwidth]{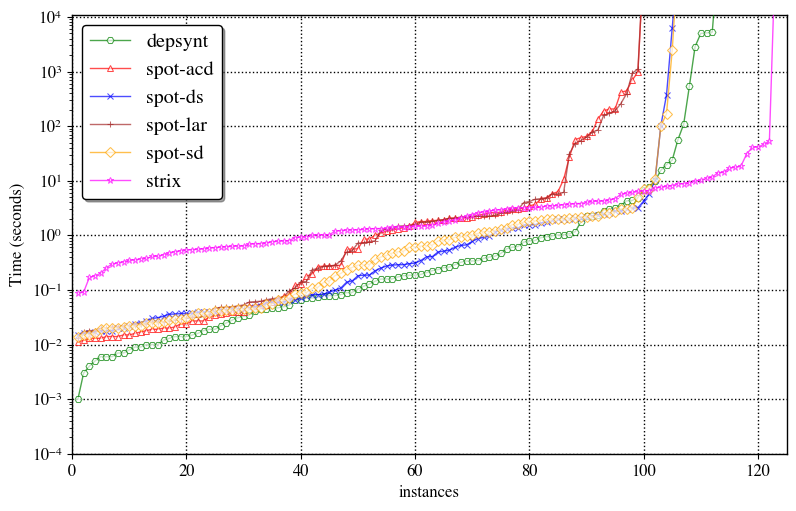}
        \caption{Cactus plot comparing DepSynt, LtlSynt, and Strix on 138 benchmarks with more than 3 non-dependent variables.}
        \label{fig:depsynt_vs_tools_indepsg3}
    \end{minipage}
\end{figure}

\subsection{Analyzing time taken by different parts of the pipeline}
\begin{figure}[h]
\centering
  \includegraphics[width=\textwidth]{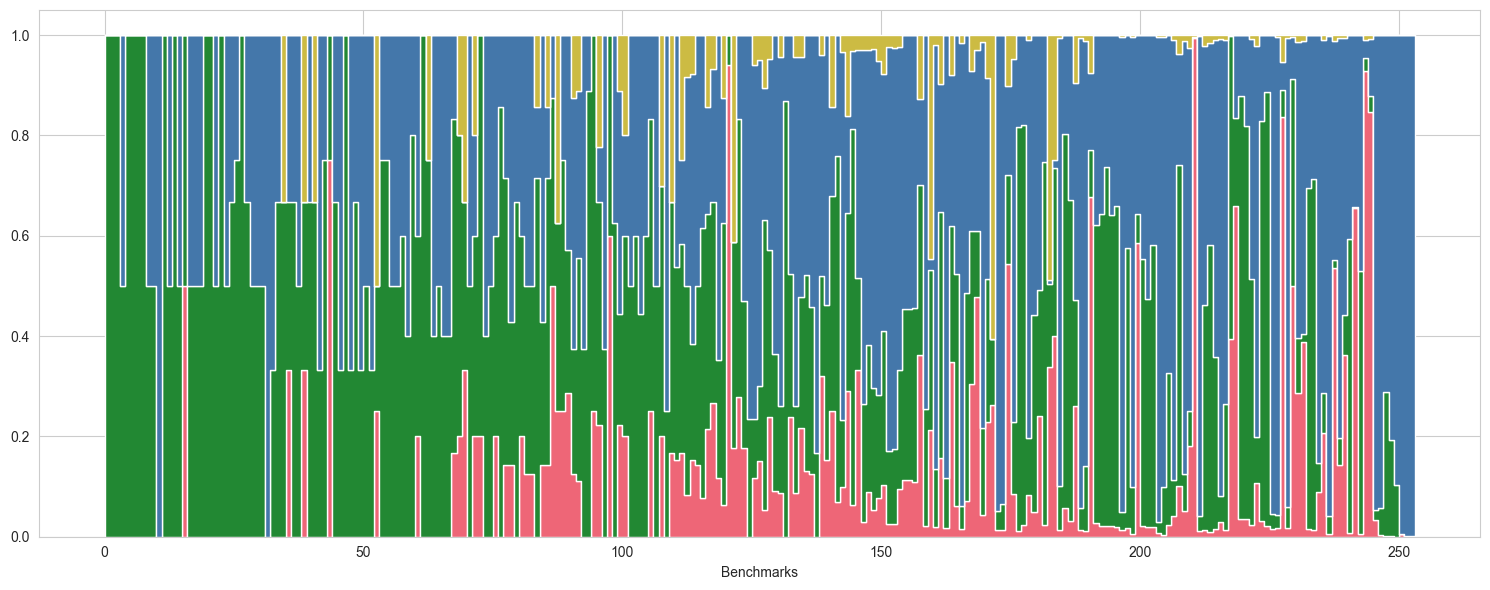}
  \caption{Normalized time distribution of DepSynt sorted by total duration over benchmarks that could be solved successfully by DepSynt.
  Each color represents a different phase of DepSynt.
  The pink is searching for dependency, the green is the NBA build, the blue is the non-dependent variables and the yellow is the dependent variables synthesis.}
  \label{fig:depsynt_time_distribution}
\end{figure}

\begin{figure}[h]
    \centering
      \includegraphics[width=0.9\textwidth]{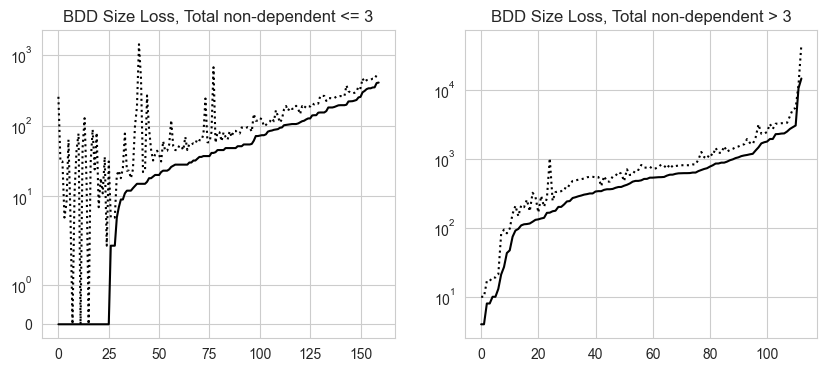}
      \caption{This figure illustrates the total BDD sizes of the NBA edges before and after the projection of the dependent variables from the NBA edges, the left figure is over the benchmark with at most 3 non-dependent variables and the right figure is over the benchmarks with 4 or more non-dependent variables.
      The solid line presents the projected BDD size and the dotted line presents the original BDD size.
      The y-axis is presented in a symmetric log-scale. The benchmarks are sorted by the projected NBA's BDD total size.}
      \label{fig:bdd-analysis-indeps-leq-3}
\end{figure}

In order to better understand where {\DepSynt} spends its time, we plotted in Figure \ref{fig:depsynt_time_distribution} the normalized time distribution of {\DepSynt}. We can see that synthesizing a strategy for dependent variables is very fast (the gray portion)- justifying its theoretical linear complexity bound, and so is the red-region depicting searching for dependency (again, a poly-time algorithm), especially compared to the white region synthesizing a strategy for the non-dependent variables. This also explains why having high dependency ratio alone does not help our approach, since even with high ratio, number of non-dependent variables could be large, resulting in worse performance overall.

\subsection{ Analysis of the Projection step (Step 3.) of Pipeline}
The rationale for projecting variables from the NBA is to reduce the number of output non-dependent variables in the synthesis of the NBA, which is the most expensive phase as Figure \ref{fig:depsynt_time_distribution} shows. To see if this indeed contributes to our better performance, we asked if projecting the dependent variables reduces the BDDs' sizes, in terms of total nodes, (the BDD represents the transitions). Figure \ref{fig:bdd-analysis-indeps-leq-3} shows that the BDDs' sizes are reduced significantly where the total of non-dependent variables is at most 3, in cases of total dependency, the BDD just vanishes and is replaced by the constant true/false. For the case of total non-dependent is 4 or more, the BDD size is reduced as well.

\subsection{An ablation experiment with Spot} As a final check, that dependency was causing the improvements seen, we conducted a control/ablation experiment where in {\DepSynt} we gave zero-timeout to find dependency, and classifies all output variables are classified as non-dependent, and called this SpotModular. As can be seen in Figure~\ref{fig:depsynt_vs_spotmodular_indepsle3}, in the left plot, for the case of benchmarks with at least 1 dependent and at most 3 non-dependent variables, the benefit of dependency-checking, while the plot on the right, shows that for the remaining cases we do not see this.
\begin{figure}[!h]
  \centering
    \includegraphics[width=0.45\linewidth]{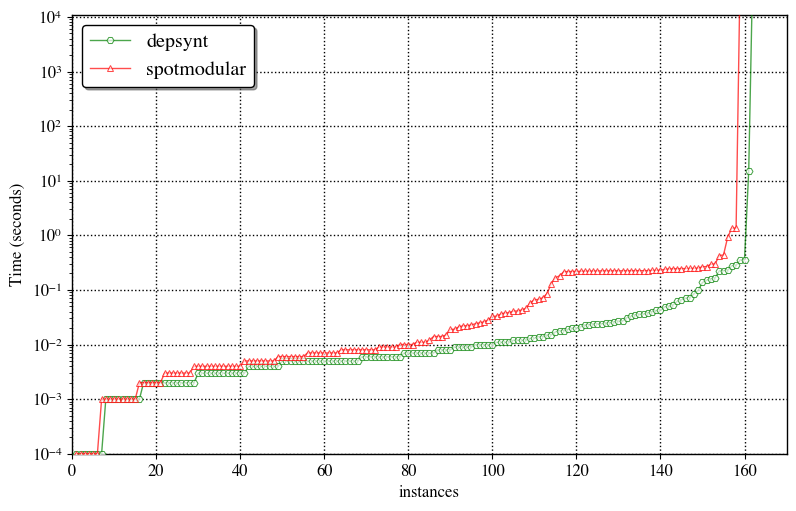}
    \includegraphics[width=0.45\textwidth]{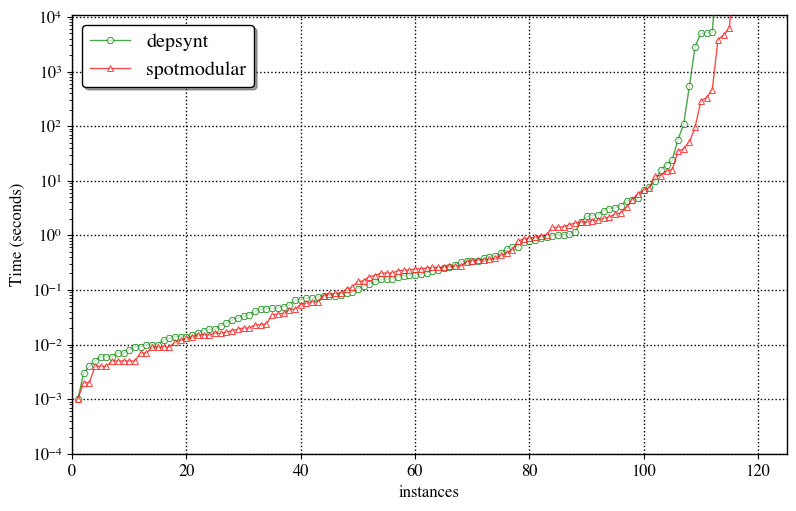}
    \caption{Cactus plot comparing DepSynt and SpotModular on 162 benchmarks with at most 3 non-dependent variables (left) and 138 benchmarks with more than 3 non-dependent variables (right).}
\label{fig:depsynt_vs_spotmodular_indepsle3}
\end{figure}

\subsection{Summary} Overall, we answered both the research questions we started with. Indeed there are several benchmarks with dependent variables, and using our pipeline does give performance benefits when no. of non-dependent variables is low. Our recipe would be to first run our poly-time check to see if there are dependents and if there aren't too many non-dependents, use our approach and otherwise switch to any existing method. 

To summarize our comparison against other tools, we state the following. When comparing with Strix, we found 252 benchmarks that had dependent variables in which DepSynt took less time than Strix. Out of which, in 126 benchmarks DepSynt took at least 1 second less than Strix. Among these, there are 10 benchmarks (shift16, LightsTotal\_d65ed84e, LightsTotal\_9cbf2546, LightsTotal\_06e9cad4, Lights2\_f3987563, Lights2\_0f5381e9, FelixSpecFixed3.core\_b209ff21, Lights2\_b02056d6, Lights2\_06e9cad4, LightsTotal\_2c5b09da) for which the time taken by DepSynt was at least 10 seconds less than that taken by Strix. These are the examples that are easier to solve by DepSynt than by Strix. For shift16, the difference was more than 1056 seconds in favor of DepSynt. Interestingly, shift16 also has all output variables dependent.

When comparing with Ltlsynt, we found 193 benchmarks that had dependent variables in which DepSynt took less time than Ltlsynt. Among these, in 27 benchmarks DepSynt took at least 1 second less than Ltlsynt. Of these, there is one benchmark (ModifiedLedMatrix5X) for which the time taken by DepSynt was at least 10 seconds less than that taken by Ltlsynt. Specifically, DepSynt took 5 seconds and Ltlsynt took 55 seconds.

\section{Conclusion}\label{sec:conc}
\vspace*{-2mm}
In this work, we have introduced the notion of dependent variables in the context of reactive synthesis. We showed that dependent variables are prevalent in reactive synthesis benchmarks and suggested a synthesis approach that may utilize these dependency for better synthesis. As part of future work, we wish to explore heuristics for choosing "good" maximal subsets of dependent variables. We also wish to explore integration of our method in other reactive synthesis tools such as Strix. 

\bibliographystyle{splncs04}
\bibliography{refs}

\begin{thebibliography}{10}
\providecommand{\url}[1]{\texttt{#1}}
\providecommand{\urlprefix}{URL }
\providecommand{\doi}[1]{https://doi.org/#1}

\bibitem{fmcad19}
Akshay, S., Arora, J., Chakraborty, S., Krishna, S.N., Raghunathan, D., Shah,
  S.: Knowledge compilation for boolean functional synthesis. In: Barrett,
  C.W., Yang, J. (eds.) 2019 Formal Methods in Computer Aided Design, {FMCAD}
  2019, San Jose, CA, USA, October 22-25, 2019. pp. 161--169. {IEEE} (2019)

\bibitem{ACHLTI22}
Akshay, S., Chakraborty, S.: Synthesizing skolem functions: A view from theory
  and practice. In: Sarukkai, S., Chakraborty, M. (eds.) Handbook of Logical
  Thought in India, pp. 1--36. Springer (2022)

\bibitem{akshayCAV18}
Akshay, S., Chakraborty, S., Goel, S., Kulal, S., Shah, S.: What's hard about
  boolean functional synthesis? In: Computer Aided Verification - 30th
  International Conference, {CAV} 2018, Held as Part of the Federated Logic
  Conference, FloC 2018, Oxford, UK, July 14-17, 2018, Proceedings, Part {I}.
  Lecture Notes in Computer Science, vol. 10981, pp. 251--269. Springer (2018)

\bibitem{bfss}
Akshay, S., Chakraborty, S., Goel, S., Kulal, S., Shah, S.: Boolean functional
  synthesis: hardness and practical algorithms. Formal Methods Syst. Des.
  \textbf{57}(1),  53--86 (2021). \doi{10.1007/s10703-020-00352-2},
  \url{https://doi.org/10.1007/s10703-020-00352-2}

\bibitem{AmramBFTVW21}
Amram, G., Bansal, S., Fried, D., Tabajara, L.M., Vardi, M.Y., Weiss, G.:
  Adapting behaviors via reactive synthesis. In: Silva, A., Leino, K.R.M.
  (eds.) Computer Aided Verification - 33rd International Conference, {CAV}
  2021, Virtual Event, July 20-23, 2021, Proceedings, Part {I}. Lecture Notes
  in Computer Science, vol. 12759, pp. 870--893. Springer (2021)

\bibitem{spot}
Blahoudek, F., Duret-Lutz, A., Strej\v{c}ek, J.: {S}eminator~2 can complement
  generalized {B\"u}chi automata via improved semi-determinization. In:
  Proceedings of the 32nd International Conference on Computer-Aided
  Verification (CAV'20). Lecture Notes in Computer Science, vol. 12225, pp.
  15--27. Springer (Jul 2020)

\bibitem{BloemCJ18}
Bloem, R., Chatterjee, K., Jobstmann, B.: Graph games and reactive synthesis.
  In: Clarke, E.M., Henzinger, T.A., Veith, H., Bloem, R. (eds.) Handbook of
  Model Checking, pp. 921--962. Springer (2018)

\bibitem{BloemJPPS12}
Bloem, R., Jobstmann, B., Piterman, N., Pnueli, A., Sa'ar, Y.: Synthesis of
  reactive(1) designs. J. Comput. Syst. Sci.  \textbf{78}(3),  911--938 (2012)

\bibitem{bryant1995binary}
Bryant, R.E.: Binary decision diagrams and beyond: Enabling technologies for
  formal verification. In: Proceedings of IEEE International Conference on
  Computer Aided Design (ICCAD). pp. 236--243. IEEE (1995)

\bibitem{ChakrabortyFTV22}
Chakraborty, S., Fried, D., Tabajara, L.M., Vardi, M.Y.: Functional synthesis
  via input-output separation. Formal Methods Syst. Des.  \textbf{60}(2),
  228--258 (2022)

\bibitem{church1962}
Church, A.: Logic, arithmetic, and automata. In: International Congress of
  Mathematicians. p. 23–35 (1962)

\bibitem{darwiche-jacm}
Darwiche, A.: Decomposable negation normal form. J. {ACM}  \textbf{48}(4),
  608--647 (2001)

\bibitem{faymonville2017bosy}
Faymonville, P., Finkbeiner, B., Tentrup, L.: Bosy: An experimentation
  framework for bounded synthesis. In: Computer Aided Verification: 29th
  International Conference, CAV 2017, Heidelberg, Germany, July 24-28, 2017,
  Proceedings, Part II. pp. 325--332. Springer (2017)

\bibitem{FinkbeinerGP21}
Finkbeiner, B., Geier, G., Passing, N.: Specification decomposition for
  reactive synthesis. In: {NASA} Formal Methods - 13th International Symposium,
  {NFM} 2021, Virtual Event, May 24-28, 2021, Proceedings. Lecture Notes in
  Computer Science, vol. 12673, pp. 113--130. Springer (2021)

\bibitem{FinkbeinerS13}
Finkbeiner, B., Schewe, S.: Bounded synthesis. Int. J. Softw. Tools Technol.
  Transf.  \textbf{15}(5-6),  519--539 (2013)

\bibitem{EngineeringBFS}
Golia, P., Slivovsky, F., Roy, S., Meel, K.S.: Engineering an efficient boolean
  functional synthesis engine. 2021 IEEE/ACM International Conference On
  Computer Aided Design (ICCAD) pp.~1--9 (2021)

\bibitem{HopcroftUllman79}
Hopcroft, J.E., Ullman, J.D.: Introduction to Automata Theory, Languages, and
  Computation. Addison-Wesley Publishing Company (1979)

\bibitem{huth-ryan}
Huth, M., Ryan, M.: Logic in Computer Science: Modelling and Reasoning about
  Systems. Cambridge University Press, USA (2004)

\bibitem{syntComp}
Jacobs, S., Perez, G.A., Abraham, R., Bruyere, V., Cadilhac, M., Colange, M.,
  Delfosse, C., van Dijk, T., Duret-Lutz, A., Faymonville, P., Finkbeiner, B.,
  Khalimov, A., Klein, F., Luttenberger, M., Meyer, K., Michaud, T., Pommellet,
  A., Renkin, F., Schlehuber-Caissier, P., Sakr, M., Sickert, S., Staquet, G.,
  Tamines, C., Tentrup, L., Walker, A.: The reactive synthesis competition
  (syntcomp): 2018-2021 (2022)

\bibitem{meyer2018strix}
Meyer, P.J., Sickert, S., Luttenberger, M.: Strix: Explicit reactive synthesis
  strikes back! In: Computer Aided Verification: 30th International Conference,
  CAV 2018, Held as Part of the Federated Logic Conference, FloC 2018, Oxford,
  UK, July 14-17, 2018, Proceedings, Part I. pp. 578--586. Springer (2018)

\bibitem{michaud2018reactive}
Michaud, T., Colange, M.: Reactive synthesis from ltl specification with spot.
  In: Proceedings of the 7th Workshop on Synthesis, SYNT@ CAV (2018)

\bibitem{pnueli89}
Pnueli, A., Rosner, R.: On the synthesis of a reactive module. In: Proceedings
  of the 16th ACM SIGPLAN-SIGACT symposium on Principles of programming
  languages. pp. 179--190 (1989)

\bibitem{cadet}
Rabe, M.N., Seshia, S.A.: Incremental determinization. In: Theory and
  Applications of Satisfiability Testing - {SAT} 2016 - 19th International
  Conference, Bordeaux, France, July 5-8, 2016, Proceedings. pp. 375--392
  (2016)

\bibitem{ShahBAC21}
Shah, P., Bansal, A., Akshay, S., Chakraborty, S.: A normal form
  characterization for efficient boolean skolem function synthesis. In: 36th
  Annual {ACM/IEEE} Symposium on Logic in Computer Science, {LICS} 2021, Rome,
  Italy, June 29 - July 2, 2021. pp. 1--13. {IEEE} (2021).
  \doi{10.1109/LICS52264.2021.9470741},
  \url{https://doi.org/10.1109/LICS52264.2021.9470741}

\bibitem{SM22}
Soos, M., Meel, K.S.: Arjun: An efficient independent support computation
  technique and its applications to counting and sampling. In: ICCAD (Nov 2022)

\bibitem{tseitin1983complexity}
Tseitin, G.S.: On the complexity of derivation in propositional calculus.
  Automation of reasoning: 2: Classical papers on computational logic
  1967--1970 pp. 466--483 (1983)

\bibitem{VW94}
Vardi, M., Wolper, P.: Reasoning about infinite computations. Information and
  Computation  \textbf{115}(1),  1--37 (1994).
  \doi{https://doi.org/10.1006/inco.1994.1092},
  \url{https://www.sciencedirect.com/science/article/pii/S0890540184710923}

\end{thebibliography}

\clearpage
\iffullversion
\appendix

\section{Additional material for defining dependency}
\label{sec:appdep}
To find a maximal dependent set of variables we describe Algorithm~\ref{alg:findDep} called \textbf{FindDependent} as follows. \textbf{FindDependent} gets as input an LTL formula $\varphi$ over the set of variables $V=I\cup O$, and initializes the set $X$ that includes the dependent variables found so far to be empty (line $1$).
Then at every step (line $2$) we choose a variable $z\in O$ that was not tested before and check if $z$ is dependent of the remaining variables, excluding those that were already found dependent (line 3). This is done through a procedure called isDependent() which we discuss next. If so then $x$ is added  $X$ (line 4). At the end of the process, we have that $X$ is returned as an output of \textbf{FindDependent}.
The heart of the framework is naturally in the procedure \textbf{isDependent} that takes a variable $z$ and a set $Y$ and returns true iff $\{z\}$ is dependent on $Y$ in $\varphi$ according to definition~\ref{def:LTLDep}.

To prove that \textbf{FindDependent} returns a maximal dependent set, we assume that  $isDependent$ is well defined. That is, for a variable $z$ and a set $Y$ $isDependent$ returns $true$ if and only if $\{z\}$ is dependent on $Y$ in $\varphi$. The automata variant of FindDependent, that we use in this paper, makes use of the automata variant of $isDependent$, called $isAutomataDependent$, in which this assumption is verified.

\begin{theorem}\label{thm:findDep}
    Given an LTL formula $\varphi$ over the set of variables $V$, \textbf{FindDependent} returns a maximal   dependent set in $\varphi$.
\end{theorem}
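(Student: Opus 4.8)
The plan is to verify the two defining requirements of a maximal dependent set separately: that the returned set $X$ is dependent in $\varphi$ (i.e.\ dependent on $V\setminus X$), and that no strict superset of $X$ is dependent. Throughout I would lean on the stated assumption that \textbf{isDependent} is correct, and on two elementary monotonicity facts that follow directly from Definition~\ref{def:LTLDep} and which I would record first. The first is \emph{upward monotonicity in the dependency set}: if $Y\subseteq Y'$ and $\{z\}$ is dependent on $Y$, then $\{z\}$ is dependent on $Y'$ (more information can only help); contrapositively, failing to be dependent on $Y'$ forces failing on every $Y\subseteq Y'$. The second is \emph{downward closure under subsets}: if a set $X'$ is dependent on $V\setminus X'$ and $X''\subseteq X'$, then $X''$ is dependent, because the hypotheses for $X''$ supply at least as much agreement (on $V\setminus X''\supseteq V\setminus X'$) as those for $X'$, so $X'$, and hence its sub-part $X''$, is pinned down.

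For the first requirement, write the variables added to $X$ in the order $z_1,\dots,z_m$ in which the algorithm inserts them. When $z_i$ was accepted, \textbf{isDependent} certified that $\{z_i\}$ is dependent on $Y_i = V\setminus\{z_1,\dots,z_i\}$, since the algorithm excludes the previously found dependents $\{z_1,\dots,z_{i-1}\}$ from the candidate dependency set. The subtlety is that $Y_i$ still contains the \emph{later} found dependents $z_{i+1},\dots,z_m$, whereas to conclude that $X$ is dependent on $V\setminus X$ I must not treat any variable of $X$ as a known quantity. I would resolve this by a downward (cascading/substitution) induction: fix $w,w'\in L(\varphi)$ and $i\ge 0$ with $w[0,i-1]=w'[0,i-1]$ and $w_i.(V\setminus X)=w'_i.(V\setminus X)$, and prove $w_i.z_j=w'_i.z_j$ for $j=m,m-1,\dots,1$ in turn. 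For $j=m$ we have $Y_m=V\setminus X$, so dependency of $z_m$ yields the agreement directly; for general $j$, $Y_j=(V\setminus X)\cup\{z_{j+1},\dots,z_m\}$, and the induction hypothesis already gives agreement on $z_{j+1},\dots,z_m$, so the two words agree on all of $Y_j$ and dependency of $z_j$ closes the step. This yields $w_i.X=w'_i.X$, i.e.\ $X$ is dependent in $\varphi$.

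For maximality I would first show that $X\cup\{z\}$ is not dependent for every $z\in O\setminus X$, and then rule out arbitrary dependent strict supersets. Each such $z$ was tested and rejected, so \textbf{isDependent} reported that $\{z\}$ is not dependent on $Y_z=V\setminus(X_z\cup\{z\})$, where $X_z\subseteq X$ is the partial dependent set at that moment. Since $W:=V\setminus(X\cup\{z\})\subseteq Y_z$, upward monotonicity gives that $\{z\}$ is not dependent on $W$. If $X\cup\{z\}$ were dependent on $W$, then in particular its sub-part $\{z\}$ would be dependent on $W$, a contradiction, so $X\cup\{z\}$ is not dependent. Finally, if some $X'$ with $X\subsetneq X'\subseteq O$ were dependent, pick $z\in X'\setminus X$; then $X\cup\{z\}\subseteq X'$ would be dependent by downward closure, contradicting the previous line. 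Together with the trivial observation that the loop over the finite set $O$ terminates after testing each output exactly once, this establishes that \textbf{FindDependent} returns a maximal dependent set.

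I expect the cascading induction in the first requirement to be the main obstacle: it is the one place where the greedy choice of dependency sets, each new variable being certified against a set that still contains variables later added to $X$, must be unwound, and getting the two monotonicity facts in the right direction and applied to the correct set inclusions is exactly where a careless argument would go astray.
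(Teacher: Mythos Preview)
Your proof is correct and follows essentially the same approach as the paper: an induction over the inserted variables to show the returned $X$ is dependent (the paper frames it as a forward induction on the partial sets $X^i$, you as a backward cascade over $z_m,\dots,z_1$, but the content is the same), followed by downward closure plus contradiction for maximality. If anything, your maximality argument is slightly more careful than the paper's, since you explicitly bridge the gap between the set $V\setminus(X_z\cup\{z\})$ on which $z$ was actually tested and the smaller set $V\setminus(X\cup\{z\})$ via your monotonicity observation.
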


\begin{algorithm}
\caption{Find Dependent Variables}\label{alg:findDep}
\hspace*{\algorithmicindent} \textbf{Input} 
LTL Formula $\varphi$ with variables $V=I\cup O$
\\
\hspace*{\algorithmicindent} \textbf{Output}
A set $X\subseteq O$ which is maximal dependent in $\varphi$

\begin{algorithmic}[1]\label{alg:isDependent}

\State $X \gets \emptyset$

\For{$z \in O$}
    \If{ isDependent($z, V\backslash (X\cup \{z\})$)}
        \State $X \gets X \cup \{ z \}$
    \EndIf
\EndFor
\State \textbf{return} $X$

\end{algorithmic}
\end{algorithm}

To prove Theorem~\ref{thm:findDep}, we first make the following claim.

\begin{claim}\label{clm:defdep2}
    Let $X'$ be a set of output variables and $X\subseteq X'$. If $X'$ is dependent in $\varphi$, then so is $X$.
\end{claim}

\begin{proof}
    Set $w,w'\in L(\varphi)$ with the same prefix $w[0,i]=w'[0,i]$ for some $i\geq 0$. Assume $w_i. (V\backslash X) = w'_i.(V\backslash X)$. Then  since $V\backslash X'\subseteq V\backslash X$ we have that $w_i.V\backslash X' = w'_i.V\backslash X'$. Therefore $w_i.X'=w_i.X'$, which means that $w_i.X=w_i.X$ as well.
\end{proof}

We now provide the proof for Theorem~\ref{thm:findDep}.

\begin{proof}[Proof of Theorem~\ref{thm:findDep}]
    Denote by $X^i$ the set $X$ in the $i$'th step of the loop obtained in line 5. We show by induction that $X^i$ is dependent on $V\backslash X^i$ in $\varphi$. For step $0$, we have that the empty set is naturally dependent on $V$. Assume by induction  that $X^{i-1}$ is dependent on $V\backslash X^{i-1}$. Then at step $i$, if no element was added to $X^{i-1}$ then $X^i=X^{i-1}$ and we are done. Otherwise, we have $X^i=X^{i-1}\cup\{z\}$. 
    Then let $j\geq 0$ be such that for two words $w,w'$ in $L(\varphi)$, $w[0,j-1]=w'[0,j-1]$ and $w_j.V\backslash X^i= w'_j.V\backslash X^i$. Then since $z$ was just added, it follows that $w_j.\{z\}=w'_j.\{z\}$. But then since $X^i=X^{i-1}\cup\{z\}$ is follows that $w_j.V\backslash X^{i-1}= w'_j.V\backslash X^{i-1}$ and by induction $w'_j.X^{i-1}=w'_j.X^{i-1}$. hence in total $w_j.X^i=w'_j.X^i$. Note that it also follows that at the end of the algorithm $X$ is dependent on $V\backslash X$ in $\varphi$ and hence $X$ is dependent in $\varphi$.
    
    To see that $X$ is maximal, assume for contradiction that it is not and let $X'\subseteq O$ be a dependent set that strictly contains $X$. Let $z\in X'\backslash X$. Then $\{z\}\subseteq X'$ hence, from the Claim above we have that $z$ is dependent in $V\backslash\{z\}$. So $z$ should have been picked up as a member of $X^i$ and therefore in $X$ in some round $i$ of the algorithm, a contradiction.\qed
\end{proof}

\fi

\end{document}